\setlist[itemize]{topsep=0pt,partopsep=0pt, itemsep=0pt, parsep=0pt}
\setlist[enumerate]{partopsep=0pt, itemsep=0pt, parsep=0pt}
\theoremstyle{definition}
\newtheorem{definition}{Definition}[section]
\theoremstyle{plain}
\newtheorem{lemma}{Lemma}[section]
\begin{document}

    \title{A Polynomial Time Algorithm for 3SAT} 
    \author{Robert Quigley}

    \begin{abstract}
        It is shown that any two clauses in an instance of 3SAT sharing the same terminal which is positive in one clause and negated in the other can imply a new clause composed of the remaining terms from both clauses. Clauses can also imply other clauses as long as all the terms in the implying clauses exist in the implied clause. It is shown an instance of 3SAT is unsatisfiable if and only if it can derive contradicting 1-terminal clauses in exponential time. It is further shown that these contradicting clauses can be implied with the aforementioned techniques without processing clauses of length 4 or greater, reducing the computation to polynomial time. Therefore there is a polynomial time algorithm that will produce contradicting 1-terminal clauses if and only if the instance of 3SAT is unsatisfiable. Since such an algorithm exists and 3SAT is NP-Complete, P = NP.
    \end{abstract}
    
    \maketitle 
    
    \section{Introduction}

        This section introduces the 3SAT problem, the implications of
        solving it in polynomial time, and the structure of the paper.

        As seen in \cite{10.1145/800157.805047},
        The boolean satisfiability problem is given as a set of terminals, $x_1, x_2, ..., x_n$,
        each of which can be assigned a value of True or False,
        combined by logical AND operators, logical OR operators, and negations.
        The problem is to determine whether or not there exists an assignment
        for each terminal that allows the instance to evaluate to True.
        As seen in \cite{Karp1972}, the satisfiability problem
        with exactly 3 literals per clause is NP-Complete. This is the same 
        as the boolean satisfiability problem, but it is presented such that
        a clause contains exactly three terminals combined with logical OR operators
        and possibly negations, and each clause is combined with logical AND operators.
        The idea of NP-completeness shows that if one NP-complete problem can be solved
        in polynomial time, then all problems in the class NP can be solved in polynomial time.
        In other words, P = NP. 

        The paper is structured as follows:
        \begin{enumerate}
            \item Introduction
            \item Standard Definitions
            \begin{itemize}
                \item Terms relating to the 3SAT problem
            \end{itemize}
            \item Algorithm-Specific Definitions
            \begin{itemize}
                \item Terms relating to specific aspects of 3SAT regularly referenced in this paper
            \end{itemize}
            \item Reformatting and Processing
            \begin{itemize}
                \item Defines how clauses and instances of the 3SAT problem will appear within the paper
            \end{itemize}
            \item Lemmas
            \begin{itemize}
                \item A list of lemmas and their proofs pertaining to the algorithm
            \end{itemize}
            \item Algorithm
            \begin{itemize}
                \item A step-by-step description of the algorithm to solve 3SAT in polynomial time
            \end{itemize}
            \item Time Complexity Analysis
            \begin{itemize}
                \item An analysis of the time complexity of the algorithm
            \end{itemize}
            \item Proof of Correctness
            \begin{itemize}
                \item A proof showing the algorithm works for every instance of 3SAT
            \end{itemize}
            \item Conclusion
            \item References
        \end{enumerate}
        
    \section{Standard Definitions}

    \begin{definition}
        Terminals: symbols used in the 3SAT problem that can be assigned a value
        of either 0 or 1, True or False, or any other binary assignment. They usually
        take the form $x_i$ where $i$ is a natural number.
    \end{definition}
    \begin{definition}
        Terms: terminals in either the positive or negated form that appear in a clause.
        If a term is positive, then the value of the terminal will be the 
        same as the value of the term. If a term is negated, then the value
        of the terminal will be the opposite of the value of the term.
    \end{definition}
    \begin{definition}
        Clauses: a set of terms combined by logical OR operators.
        Clauses usually take the form 
        $(x_i \lor \neg x_j \lor x_k)$ where $x_i \neq x_j \neq x_k$.
    \end{definition}
    \begin{definition}
        (3SAT) Instance: a set of any number of clauses combined by logical
        AND operators. Terminals may not repeat within a clause
        \footnote{See Lemma 5.3}
        , but they are free to repeat between clauses. 
        Instances usually take the form:

        $(x_i \lor \neg x_j \lor x_k) \land (x_l \lor x_m \lor x_n)$.
    \end{definition}
    \begin{definition}
        Assignment: A list of values in which each value represents either True
        or False such that each item in the list corresponds to a terminal and 
        all terminals are assigned a value.
    \end{definition}
    \begin{definition}
        Satisfying Assignment: An assignment, $A$, is said to satisfy the instance if applying $A$ will make the instance evaluate to True.
    \end{definition}
    \begin{definition}
        Satisfiable: an instance is satisfiable iff there exists a satisfying assignment.
    \end{definition}
    \begin{definition}
        Unsatisfiable: an instance is unsatisfiable iff there does not exist a satisfying assignment.
    \end{definition}

    \section{Algorithm-Specific Definitions}
    \begin{definition}
        Blocking an Assignment: An assignment, $A$, is said to be blocked by a clause, $C$ if, 
        given an instance containing $C$, there is no way that $A$ allows $C$ to evaluate to True, and
        thus there is no way $A$ allows the instance to evaluate to True.
    \end{definition}
    \begin{definition}
        Implication: A clause, $C$, is said to imply another clause, $D$, if all 
        assignments blocked by $D$ are also blocked by $C$.
    \end{definition}
    \begin{definition}
        Given Clauses: clauses which were given in the original instance.
    \end{definition}
    \begin{definition}
        Derived or Implied Clauses: clauses which are implied by the clauses in the original instance.
    \end{definition}
    \begin{definition}
        k-terminal (k-t) clause: a clause is described as a k-terminal or a k-t clause
        if there are $k$ terminals in the clause.
    \end{definition}
    \begin{definition}
        Reduction: A special type of implication in which the implied 
        clause is shorter than the implying clause(s).
    \end{definition}
    \begin{definition}
        Expansion: A special type of implication in which the 
        implied clause is longer than the implying clause(s) and all terms in the implying
        clause exist in the implied clause.
    \end{definition}
    \begin{definition}
        Contradicting 1-terminal Clauses: A set of two clauses are considered to be contradicting 1-terminal clauses if (1) they are both of length 1, (2) they contain the same terminal, and (3) the terminal is positive in one clause and negated in the other.
    \end{definition}

    \section{Reformatting and Processing}

    Since there are a lot of constant characteristics about an instance of 3SAT, 
    we can remove most of them to allow ourselves to focus only on what changes
    from instance to instance. A list of unchanging characteristics follows:
    \begin{itemize}
        \item the symbol $x$
        \item logical AND operators
        \item logical OR operators
    \end{itemize}
    
    The only difference between instances, therefore, is the subscript of the terminal.
    
    Additionally, the following items will be changed to improve compatibility
    with the Python programming language wherein an instance is expressed as
    a list of lists and each inner list represents a clause:
    \begin{itemize}
        \item parentheses will become square brackets
        \item negation symbols will become minus signs
        \item an instance may be surrounded with square brackets to show it is a list of lists
    \end{itemize}
   
    For example, the instance:
    
    $(\neg x_a \lor x_b \lor x_c) \land (x_a \lor x_d \lor x_e)$

    will be written as:

    $[[-a, b, c], [a, d, e]]$

    Instances of 3SAT will further be processed by removing any clauses that do not block
    any assignments. Since this algorithm relies on implications of new clauses, if we ever
    come across a clause that blocks no assignments, then by definition it will not be able
    to imply any additional clauses that block at least one assignment.

    As such, we will ignore any clauses given or derived that are described in Lemma 5.3

    \section{Lemmas}

    \begin{lemma}
        A clause can block an assignment.
    \end{lemma}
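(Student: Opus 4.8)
The plan is to prove the statement by explicit construction, since it is an existence claim: it suffices to exhibit a single clause together with an assignment that the clause blocks. First I would recall the semantics established in the definitions. A clause is a disjunction of its terms, so by the meaning of the logical OR operator a clause evaluates to True exactly when at least one of its terms evaluates to True, and consequently it evaluates to False exactly when every one of its terms evaluates to False. A term in turn evaluates to False when it is positive and its terminal is assigned False, or when it is negated and its terminal is assigned True.

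Next I would construct an assignment tailored to force every term of a chosen clause to be False. Taking for concreteness the clause $(x_i \lor \neg x_j \lor x_k)$, written $[i, -j, k]$, I would define an assignment $A$ by setting $x_i \mapsto \text{False}$, $x_j \mapsto \text{True}$, and $x_k \mapsto \text{False}$, while assigning every remaining terminal of the instance an arbitrary value. Under $A$ the positive terms $x_i$ and $x_k$ are False because their terminals are False, and the negated term $\neg x_j$ is False because its terminal is True; hence all three terms are False.

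Since every term is False, the clause evaluates to False under $A$, so there is no way $A$ allows the clause to evaluate to True, and in any instance containing this clause the instance is likewise prevented from evaluating to True under $A$. By the definition of blocking, this is exactly what it means for the clause to block $A$, which establishes the claim.

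I do not anticipate a genuine obstacle here, as the result is foundational and reduces to unwinding the definitions of OR, of a term's truth value, and of blocking. The one point meriting care is confirming that the constructed assignment is \emph{total}, that is, that it assigns a value to every terminal of the instance rather than only to those appearing in the clause, so that it qualifies as an assignment in the sense of the earlier definition; this is handled by assigning all remaining terminals arbitrarily.
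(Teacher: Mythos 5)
Your proposal is correct and takes essentially the same route as the paper: construct an assignment that makes every term of the clause evaluate to False, conclude that the clause (and hence any instance containing it) evaluates to False, and invoke the definition of blocking. The only difference is that you instantiate a concrete clause $[i, -j, k]$ while the paper argues for an arbitrary clause --- worth noting only because the lemma is later applied to clauses of arbitrary length (e.g.\ in Lemma 5.4), but your construction generalizes verbatim since terminals do not repeat within a clause.
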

    \begin{proof}
        Recall an assignment is blocked if it does not allow the clause to evaluate to True.

        Since terms in a clause are combined by logical OR operators, a clause cannot evaluate to True if all terms in the clause evaluate to False.

        A term evaluates to False if it's either (1) negated and the terminal's value is True or (2) positive and the terminal's value is False.

        Given a clause, we know there are some number of unique terminals.
        
        Want to find an assignment where all the terms are assigned a value of False.

        Since an assignment exists for all possible values for each terminal, then there exists an assignment such that all the terms in the clause evaluate to False.

        Since all the terms evaluate to False and are combined by logical OR operators, the clause will evaluate to False.

        Since the instance is composed of clauses combined by logical AND operators and one clause evaluates to False, then the entire instance evaluates to False.

        Since the instance evaluates to False, the assignment cannot satisfy the instance.
    \end{proof}

    \begin{lemma}        
        For a given instance with $n$ terminals, there are $2^n$ possible assignments.
    \end{lemma}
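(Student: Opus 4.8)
The plan is to count assignments directly, either via the multiplication principle or, for full rigor, by induction on the number of terminals $n$. Recall from the definition that an assignment is a list with exactly one value per terminal, where each value is either True or False. So an assignment is precisely a choice of one of two values for each of the $n$ terminals, and counting assignments reduces to counting such lists of length $n$ over a two-element set.

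For the product-rule version, I would observe that the value given to the first terminal can be chosen in $2$ ways, the value given to the second terminal in $2$ ways independently of the first, and so on through all $n$ terminals. Since these choices are independent and each distinct combination of choices yields a distinct assignment (while every assignment arises from exactly one such combination), the total count is the product of $n$ factors of $2$, namely $2^n$.

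For the cleaner rigorous argument I would prefer induction on $n$. The base case is immediate: with a single terminal there are exactly $2 = 2^1$ assignments, one for each value True and False. For the inductive step, suppose an instance with $k$ terminals admits exactly $2^k$ assignments. Given $k+1$ terminals, every assignment is obtained by extending one of the $2^k$ assignments of the first $k$ terminals with one of the $2$ possible values for the $(k+1)$-th terminal; this extension is a bijection between assignments of $k+1$ terminals and pairs consisting of a $k$-terminal assignment together with a value for the last terminal. Hence the number of assignments is $2 \cdot 2^k = 2^{k+1}$, completing the induction.

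The main obstacle here is not any deep computation but simply justifying the bijection step cleanly: I must confirm that every assignment of $k+1$ terminals decomposes uniquely as an assignment of the first $k$ terminals paired with a value for the last, so that no assignment is double-counted and none is omitted. Once that one-to-one correspondence is stated, the conclusion $2^n$ follows by elementary arithmetic.
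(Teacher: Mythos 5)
Your proposal is correct and matches the paper's own argument, which is exactly the multiplication-principle count: $n$ terminals, two choices each, giving $2^n$ assignments. The inductive version you add is just a rigorous formalization of that same counting idea, not a different route.
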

    \begin{proof}
        An assignment for this instance consists of $n$ values, each with two possible values, True or False.

        Therefore, there are $2^n$ possible assignments.
    \end{proof}

    \begin{lemma}
        If a clause contains the same terminal in its negated and positive form, it will not block any assignments.
    \end{lemma}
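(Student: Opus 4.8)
The plan is to show directly that a clause containing some terminal $x_i$ in both its positive and negated forms evaluates to True under every possible assignment, and hence blocks no assignments by the definition of blocking. The key observation is that the disjunction $x_i \lor \neg x_i$ is a tautology: for any value assigned to the terminal $x_i$, exactly one of the two terms is True, so their logical OR is always True.

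First I would recall the relevant definitions: a clause is a set of terms joined by logical OR operators, and an assignment is blocked by a clause precisely when that assignment never lets the clause evaluate to True (Definition of Blocking an Assignment). So to prove the clause blocks no assignment, it suffices to show that every assignment makes the clause evaluate to True. Second, I would isolate the two offending terms. Suppose the clause contains both $x_i$ (positive) and $\neg x_i$ (negated) for some terminal $x_i$, so the clause has the form $(\dots \lor x_i \lor \dots \lor \neg x_i \lor \dots)$. Third, I would argue by cases on the value of $x_i$: if $x_i$ is assigned True, then the positive term $x_i$ evaluates to True; if $x_i$ is assigned False, then the negated term $\neg x_i$ evaluates to True. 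In either case at least one term of the clause is True.

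The final step is to invoke Lemma 5.1's reasoning in reverse: since terms are combined by logical OR operators, a single True term forces the whole clause to evaluate to True, regardless of the values of the remaining terminals. Because this holds for \emph{every} assignment to $x_i$ (and the other terminals are irrelevant), there is no assignment that prevents the clause from being True. Therefore, by the definition of blocking, the clause blocks no assignment, which is exactly the statement to be proved.

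I do not anticipate a serious obstacle here, as the result is essentially the tautology $x_i \lor \neg x_i \equiv \text{True}$ dressed in the paper's terminology. The only point requiring mild care is being explicit that the other terms and terminals in the clause cannot change the outcome: once one term is True, the OR of all terms is True irrespective of the rest. I would state this explicitly so the case analysis cleanly covers all $2^n$ assignments rather than just the two values of $x_i$ in isolation.
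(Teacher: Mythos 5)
Your proposal is correct and follows essentially the same argument as the paper: a case analysis on the truth value of the shared terminal, concluding that the clause evaluates to True under every assignment and therefore blocks none. Your explicit remark that one True term suffices regardless of the remaining terminals is a minor clarification of the same reasoning, not a different approach.
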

    \begin{proof}

        Consider a clause containing the same terminal in both the positive and negative form.

        We know that terminal must either be True or False. Consider both cases:

        That terminal's value is True: The positive form of the terminal will be True and the clause will evaluate to True.

        That terminal's value is False: The negated form of the terminal will be True and the clause will evaluate to True.

        Since the clause evaluates to True in every case, there will never be an assignment with which it is impossible to make this clause evaluate to False.

        In other words, this clause blocks no assignments.
    \end{proof}

    \begin{lemma}
        Each clause of length $k$ blocks $2^{n-k}$ assignments.
    \end{lemma}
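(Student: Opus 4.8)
The plan is to count the blocked assignments directly by splitting the $n$ terminals into those the clause forces and those it leaves free. First I would invoke Lemma 5.1: since the terms of a clause are joined by logical OR operators, an assignment is blocked by the clause exactly when it makes \emph{every} term in the clause evaluate to False. So the problem reduces to counting the assignments under which all $k$ terms are simultaneously False.

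Next I would observe that each individual term is False for exactly one value of its underlying terminal: a positive term $x_i$ is False iff $x_i$ is assigned False, and a negated term $\neg x_i$ is False iff $x_i$ is assigned True. Hence requiring a given term to be False pins down the value of its terminal uniquely. Applying this to all $k$ terms of the clause shows that forcing the whole clause to evaluate to False forces a specific value onto each of the $k$ terminals appearing in it.

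The key structural point comes next: a clause of length $k$ contains $k$ pairwise-distinct terminals. This is where I would lean on the Instance definition together with Lemma 5.3 — a clause that repeats a terminal either blocks no assignment or is discarded during processing, so every clause we actually count has $k$ distinct terminals. Because these terminals occupy $k$ distinct coordinates of the assignment, the $k$ forcing constraints never collide: each of the $k$ terminals is pinned to exactly one value, while the remaining $n-k$ terminals are left entirely unconstrained. Counting by the product rule (in the spirit of Lemma 5.2), the $k$ constrained terminals contribute one choice each and the $n-k$ free terminals contribute two choices each, giving $1^{k}\cdot 2^{\,n-k}=2^{\,n-k}$ blocked assignments.

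The main obstacle — indeed the only place where care is genuinely needed — is the distinctness of the $k$ terminals. If a terminal could appear twice in the clause, two constraints might coincide (shrinking the number of pinned coordinates) or contradict one another (making the clause block nothing, consistent with Lemma 5.3), and in either case the clean $2^{\,n-k}$ count would fail. So the crux of a fully rigorous argument is justifying that every length-$k$ clause under consideration constrains exactly $k$ independent coordinates, which the formatting conventions and Lemma 5.3 guarantee.
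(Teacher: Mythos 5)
Your proof is correct and takes essentially the same route as the paper's: both reduce via Lemma 5.1 to counting assignments that make all $k$ terms False, observe that this pins each of the $k$ terminals to a fixed value, and then count $2^{n-k}$ free choices for the remaining $n-k$ terminals. The only difference is that you explicitly justify the pairwise distinctness of the $k$ terminals (via the Instance definition and Lemma 5.3), a point the paper's proof silently assumes when it says ``the values for $k$ terminals are set.''
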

    \begin{proof}
        Consider the generic $k$-terminal clause, $C$, in an instance with $n$ terminals.

        As seen in Lemma 5.1, this blocks all assignments where all of the terms evaluate to False.

        Since the values for $k$ terminals are set, there are $n-k$ terminals left whose values could be True or False.

        Since an assignment exists for every possible way to assign values to each terminal, we know an assignment exists for every possible way to assign a value for these $n-k$ terminals.

        There are two possible ways to assign values to each of these $n-k$ terminals so there are $2^{n-k}$ unique assignments blocked by $C$.
    \end{proof}

    \begin{lemma}
        For any clause, $C$, if we select a terminal, $t$, that's not in $C$ then half of the assignments blocked by $C$ will assign True to $t$ and the other half will assign 
        False to $t$.
    \end{lemma}
    \begin{proof}
        We have a clause, $C$, of fixed yet arbitrary length, $k$:

        $[a, b, c, ...]$

        Now we select a terminal, $t$, that's not in $C$.

        Want to show half of the assignments blocked by $C$ assign True to $t$ and the other half assign False to $t$.

        We know there will be no overlap between these assignments because a single assignment cannot assign both the values True and False to the same terminal.

        Now we just have to show that exactly half of the assignments are blocked by assigning either True or False to $t$.

        By Lemma 5.4, $C$ blocks $2^{n-k}$ assignments. 

        If we fix the value of $t$, then there are only $n-k-1$ terminals whose values could be 0 or 1. 
        
        Since there are two choices per terminal and there are $n-k-1$ terminals, then there are $2^{n-k-1}$ assignments blocked by $C$ where the value of $t$ is fixed.

        Divide to get the ratio of the number of assignments blocked by adding $t$ to the number of assignments blocked by $C$ without $t$:
        
        $2^{n-k-1}/2^{n-k}$

        = $2^{n - k - 1 - (n - k)}$

        = $2^{-1}$

        = $1/2$

        This shows that half of the assignments blocked by $C$ assign a fixed value to $t$.

        Since there are two possible values for $t$ and each block mutually exclusive halves of the assignments blocked by $C$, the lemma holds.
    \end{proof}

    \begin{lemma}
        Given a clause, $C$, and another clause, $D$, such that all of the terms in $C$ also exist in $D$, then all of the assignments blocked by $D$ are also blocked by $C$.
    \end{lemma}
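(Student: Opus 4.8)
The plan is to prove this directly by unwinding the definition of blocking. Recall that an assignment $A$ is blocked by a clause precisely when that clause evaluates to False under $A$, and since the terms of a clause are joined by logical OR operators (as used in Lemma 5.1), a clause evaluates to False under $A$ if and only if every one of its terms evaluates to False under $A$. So the statement reduces to a claim about which terms are forced to False, and the whole argument can be phrased in terms of this characterization.

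First I would fix an arbitrary assignment $A$ that is blocked by $D$ and aim to show it must also be blocked by $C$; since $A$ is arbitrary, this establishes that the set of assignments blocked by $D$ is contained in the set blocked by $C$, which is exactly the conclusion. By the characterization above, $A$ being blocked by $D$ means that every term appearing in $D$ evaluates to False under $A$.

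Next I would invoke the hypothesis that every term of $C$ also appears in $D$. Because each term of $C$ is among the terms of $D$, and all terms of $D$ evaluate to False under $A$, it follows immediately that every term of $C$ evaluates to False under $A$ as well. Applying the characterization of blocking in the reverse direction, $C$ evaluates to False under $A$, so $A$ is blocked by $C$, which completes the inclusion.

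Since the argument is essentially a matter of tracking a set inclusion of terms through the definition of blocking, I do not anticipate a genuine obstacle. The one point that requires care is the direction of the inclusion: the hypothesis is that $C$'s terms are contained among $D$'s terms (so $C$ is the shorter, or equal-length, clause), yet it is the longer clause $D$ whose blocked assignments form the \emph{smaller} set. Keeping straight that fewer disjuncts blocks more assignments is the only place an error could plausibly creep in, so I would state the inclusion direction explicitly at the start to avoid confusing $C$ and $D$.
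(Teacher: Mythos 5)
Your proof is correct and follows essentially the same route as the paper's: both arguments rest on the characterization that an assignment is blocked by a clause exactly when all of its terms evaluate to False, and then observe that since $C$'s terms are a subset of $D$'s, any assignment forcing all of $D$'s terms to False also forces all of $C$'s terms to False. Your version is slightly more careful in fixing an arbitrary blocked assignment and stating the inclusion direction explicitly, but the underlying idea is identical.
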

    \begin{proof}
        Given a clause, $C$, of a fixed yet arbitrary length:

        $C := [a, b, c, ...]$

        And another clause, $D$, containing all the terms in $C$ with possibly additional terms:

        $D := [a, b, c, ..., d, e, f, ...]$

        Want to show all the assignments blocked by $D$ are also blocked by $C$.

        We know that $C$ blocks all assignments that cause all the terms to evaluate to False.

        In other words, $C$ blocks all assignments where 
 
        $a = b = c = ... = False$

        Similarly, $D$ blocks all assignments where

        $a = b = c = ... = d = e = f = ... = False$

        Clearly all assignments consistent with the terminal assignments from $D$ are also consistent with the terminal assignments from $C$.

        Therefore, every assignment blocked by $D$ is also blocked by $C$.
    \end{proof}

    \begin{lemma}[Reduction]
        Given the following conditions:
        \begin{itemize}
            \item $A$ is a clause of length $k$
            \item $B$ is a clause of length $k$
            \item $A$ and $B$ share $k-1$ identical terms
            \item $A$ and $B$ share one terminal that is negated in one clause
            and positive in the other
            \item $C$ is a clause of length $k-1$ composed of the shared terms
            from $A$ and $B$
        \end{itemize}
        Then $A$ and $B$ imply $C$.
    \end{lemma}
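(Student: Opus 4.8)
The plan is to recognize this as the resolution rule and to argue directly from the blocking characterization established in Lemma 5.1, namely that a clause blocks exactly those assignments under which all of its terms evaluate to False. First I would fix notation by naming the pivot terminal. Since $A$ and $B$ agree on $k-1$ terms and disagree only on one shared terminal, I would write, without loss of generality, $A = [t_1, \dots, t_{k-1}, x]$ and $B = [t_1, \dots, t_{k-1}, \neg x]$, where the $t_i$ are the common terms and $x$ is the terminal appearing positively in $A$ and negated in $B$; then $C = [t_1, \dots, t_{k-1}]$.

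Next I would pin down what it means for ``$A$ and $B$ imply $C$'', since the Implication definition is stated for a single implying clause. Consistent with the blocking semantics, the pair $A, B$ jointly blocks an assignment precisely when $A$ blocks it or $B$ blocks it, because within an instance it suffices for one clause to fail in order to block the assignment. So the claim to prove is that every assignment blocked by $C$ is blocked by at least one of $A$ and $B$.

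The core of the argument is a case split on the pivot. Let $\alpha$ be an arbitrary assignment blocked by $C$. By Lemma 5.1, this forces every shared term $t_1, \dots, t_{k-1}$ to evaluate to False under $\alpha$. The terminal $x$ must take some value under $\alpha$, so I would consider the two cases. If $x$ is assigned True, then the term $\neg x$ in $B$ evaluates to False, and since the remaining terms of $B$ are exactly the $t_i$, all terms of $B$ evaluate to False, so $B$ blocks $\alpha$. If $x$ is assigned False, then the term $x$ in $A$ evaluates to False, and again every $t_i$ is False, so every term of $A$ is False and $A$ blocks $\alpha$. Either way $\alpha$ is blocked by $A$ or $B$, which is exactly what joint implication requires.

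I expect the only genuine subtlety to be making the meaning of joint implication explicit and confirming that the case split is exhaustive: the pivot terminal must receive one of its two values, and those two cases are handled by $B$ and $A$ respectively. Once that dichotomy is in place, the conclusion follows immediately from Lemma 5.1, so I would not anticipate any hard computation.
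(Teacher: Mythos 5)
Your proof is correct, and its core idea coincides with the paper's: partition the assignments blocked by $C$ according to the value of the pivot terminal, and observe that $A$ covers one side and $B$ the other. The route differs in its machinery, though. The paper invokes Lemma 5.5 (the counting fact that exactly half of the assignments blocked by $C$ assign True to a terminal outside $C$, and half assign False) and concludes that $A$ and $B$ block ``mutually exclusive halves'' of $C$'s blocked set; your argument never needs any counting. You work pointwise: take an arbitrary assignment $\alpha$ blocked by $C$, note by Lemma 5.1 that all shared terms are False under $\alpha$, and case-split on whether the pivot $x$ is True or False, landing in $B$ or $A$ respectively. This is more elementary (it bypasses Lemmas 5.4 and 5.5 and the implicit dependence on the total number of terminals $n$) and arguably cleaner, since the halving fact is stronger than what the conclusion requires --- exhaustiveness of the two pivot values is all that matters. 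You also make explicit something the paper leaves implicit: the definition of implication in the paper is stated for a single implying clause, so a two-clause implication ``$A$ and $B$ imply $C$'' needs to be read as ``every assignment blocked by $C$ is blocked by $A$ or by $B$''; spelling that out is a genuine improvement in rigor. One small point the paper handles that you gloss over: it notes (via Lemma 5.3) that the pivot terminal cannot also occur among the shared terms, which licenses your ``without loss of generality'' notation; this is harmless under the paper's convention that terminals do not repeat within a clause, but it deserves a sentence.
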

    \begin{proof}
        Given clauses consistent with the description:

        $A := [a, b, c, ... i]$

        $B := [a, b, c, ..., -i]$

        where $a, b, c, ...$ is shared between the clauses and $i$ is a terminal not in $a, b, c, ...$

        Want to show this implies $C$.

        Recall by Lemma 5.3 that the same terminal cannot appear both negated and positive within the same clause and still block an assignment, so $i$ cannot exist in $a, b, c...$

        Consider the clause

        $C := [a, b, c, ...]$

        We know by Lemma 5.5 that if we select a terminal that's not in $C$, say $t$, then half of the assignments blocked by $C$ assign True to $t$ and the other half of the assignments blocked by $C$ assign False to $t$.

        Let this terminal $t$ that's not in $C$ be the terminal $i$ that's in $A$ and $B$.

        We know that $A$ blocks all assignments blocked by $C$ where $i$ is assigned the value of False.

        We know that $B$ blocks all assignments blocked by $C$ where $i$ is assigned the value of True.

        Since $A$ and $B$ both block mutually exclusive halves of the assignments blocked by $C$, then all of the assignments blocked by $C$ are blocked by $A$ and/or $B$ and we can say that $A$ and $B$ imply $C$.
    \end{proof}

    \begin{lemma}[Expansion]
        Given a clause, $C$, and a terminal, $t$, that's not in $C$, then two new clauses can be implied consisting of all the terms of $C$ appended to either the positive form of $t$ or the negated form of $t$.
    \end{lemma}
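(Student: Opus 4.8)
The plan is to recognize this as an immediate consequence of Lemma 5.6, which states that whenever every term of one clause appears in another, the first clause implies the second. First I would fix notation, writing the arbitrary clause as $C := [a, b, c, \ldots]$ and constructing the two candidate clauses
\[
C^{+} := [a, b, c, \ldots, t] \qquad \text{and} \qquad C^{-} := [a, b, c, \ldots, -t],
\]
each obtained by appending one of the two forms of $t$ to all the terms of $C$. Because $t$ is not a terminal of $C$ by hypothesis, neither $C^{+}$ nor $C^{-}$ repeats a terminal, so both are well-formed clauses that still block assignments, and each is strictly longer than $C$ while containing every term of $C$. This is exactly the configuration required by the definition of Expansion.

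Next I would apply Lemma 5.6 twice. Since every term of $C$ also appears in $C^{+}$, the lemma gives that every assignment blocked by $C^{+}$ is also blocked by $C$; by the definition of Implication this means $C$ implies $C^{+}$. The identical argument with $C^{-}$ in place of $C^{+}$ shows that $C$ implies $C^{-}$. Having produced both implied clauses, the statement follows directly.

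I do not expect a serious obstacle here: the substantive work was already done in Lemma 5.6, and the present lemma is essentially its specialization to appending a single new terminal in each of its two polarities. The only point requiring care is the bookkeeping that distinguishes this from a generic implication, namely verifying that the hypothesis that $t$ is not in $C$ guarantees the appended clauses are genuinely longer and contain no terminal in both its positive and negated form; but this is routine given Lemma 5.3.
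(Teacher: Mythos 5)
Your proposal is correct and takes essentially the same route as the paper: construct the two appended clauses and apply Lemma 5.6 to conclude that $C$ implies each of them. The extra remark about $t \notin C$ ensuring no repeated terminal (via Lemma 5.3) is a harmless addition the paper omits here.
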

    \begin{proof}
        Given a clause, $C$, and a terminal, $t$, that's not in $C$:

        $C := [a, b, c, ...]$

        We can compose two new clauses:

        $D := [a, b, c, ..., t]$

        $E := [a, b, c, ..., -t]$

        Since all of the terms in $C$ exist in $D$ and $E$, then by Lemma 5.6 all of the assignments blocked by $D$ or $E$ are blocked by $C$ and we can say $D$ and $E$ are implied by $C$.   
    \end{proof}

    \begin{lemma}[General Lemma 5.7]
        If two clauses share the same terminal, $t$, such that $t$ is positive in one clause and negated in the other, then these clauses imply a new clause which is composed of all the terms in both clauses except terms containing $t$.
    \end{lemma}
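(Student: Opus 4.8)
The plan is to reduce this general statement to the two results immediately preceding it: Expansion (Lemma 5.8, or more directly the containment result Lemma 5.6) together with Reduction (Lemma 5.7). Write the two given clauses as $A := [\alpha, t]$ and $B := [\beta, -t]$, where $\alpha$ and $\beta$ denote the (possibly unequal-length) lists of all terms other than the ones containing $t$, and let $C := [\alpha, \beta]$ be the proposed resolvent, that is, the clause built from every term of $A$ and $B$ except those containing $t$. The difficulty relative to Lemma 5.7 is that $A$ and $B$ need not have equal length and need not already agree on $k-1$ terms, so Reduction cannot be applied to $A$ and $B$ as they stand.

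First I would lift both clauses onto a common set of terms. Define $A' := [\alpha, \beta, t]$ and $B' := [\alpha, \beta, -t]$. Since every term of $A$ appears in $A'$, Lemma 5.6 gives that $A$ implies $A'$; symmetrically $B$ implies $B'$. Equivalently, $A'$ and $B'$ are obtained from $A$ and $B$ by repeated Expansion, appending the missing terms. Now $A'$ and $B'$ have the same length, share exactly the terms of $C$, and differ only in the polarity of $t$, so they satisfy the hypotheses of Lemma 5.7 and therefore imply $C$.

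The final step is to chain these implications. Unwinding the definition of implication in terms of blocked assignments, $A$ implies $A'$ means every assignment blocked by $A'$ is blocked by $A$, and likewise for $B$ and $B'$; and $A'$, $B'$ imply $C$ means every assignment blocked by $C$ is blocked by $A'$ or $B'$. Composing these inclusions, every assignment blocked by $C$ is blocked by $A$ or $B$, which is precisely the claim that $A$ and $B$ imply $C$.

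The main obstacle is not the logical skeleton but the degenerate cases hidden in forming $C$. If $\alpha$ and $\beta$ contain the same terminal with opposite polarities, then $C$ contains that terminal both positively and negatively, so by Lemma 5.3 it blocks no assignments; the implication then holds vacuously, and the same clash makes $A'$ and $B'$ block nothing as well, consistent with the reformatting convention that such clauses are discarded. As a sanity check I would also verify the statement by a direct case split on the value of $t$: any assignment blocking $C$ sets every term of $\alpha$ and $\beta$ to False, so if $t$ is False it forces all of $A$ to False and hence blocks $A$, while if $t$ is True it forces all of $B$ to False and hence blocks $B$ \textemdash{} recovering the same conclusion without the padding.
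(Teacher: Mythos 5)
Your proposal is correct and follows essentially the same route as the paper's own proof: your $A' = [\alpha, \beta, t]$ and $B' = [\alpha, \beta, -t]$ are exactly the paper's auxiliary clauses $E'$ and $E''$, lifted via Lemma 5.6 and then resolved via Lemma 5.7, with the degenerate opposite-polarity case dispatched by Lemma 5.3 just as in the paper. If anything, you are slightly more careful than the paper in spelling out the transitivity of blocked-assignment containment when chaining the implications, and your closing direct case split on $t$ is a nice self-contained check, but the underlying argument is the same.
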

    \begin{proof}
        Consider two clauses, 

        $C := [a, b, c, ..., t]$
        
        $D := [d, e, f, ..., -t]$

        Where within a clause, the same terminal does not repeat, but between clauses the same terminal may repeat.

        Want to show we can imply a clause consistent with the lemma description:

        $E := [a, b, c, ..., d, e, f, ...]$

        Let's define some additional clauses:

        $E' := [a, b, c, ..., d, e, f, ..., t]$

        $E'' := [a, b, c, ..., d, e, f, ..., -t]$

        By Lemma 5.6, we know that $C$ implies $E'$ because all the terms in $C$ exist in $E$.

        By Lemma 5.6, we know that $D$ implies $E''$.

        By Lemma 5.7, since $E'$ and $E''$ share all the same terms except for $t$, which is positive in one clause and negated in the other, we can create a new clause composed of all the shared terms in $E'$ and $E''$.

        Such a clause is already defined as $E$.

        Now there are a couple extra cases to consider:
        
        \begin{itemize}
            \item There is some overlap between $a, b, c, ...$ and $d, e, f, ...$
            \item There is the same terminal that's positive in $a, b, c, ...$ and negated in $d, e, f, ...$
        \end{itemize}

        First, if the same term exists in $a, b, c, ...$ and $d, e, f, ...$, then we can just remove one of the duplicates since one term being True implies an identical term being True.

        Secondly, if the same terminal exists, but is of the opposite form in $a, b, c, ...$ and $d, e, f, ...$
        then by Lemma 5.3, this clause will always be True and thus blocks no assignments. In this case, the lemma is vacuously true, but we disregard the clause as it is of no value.
    \end{proof}

    \begin{lemma}
        Given two clauses of lengths $k$ and $m$ that imply another clause by Lemma 5.9, the length of the implied clause will fall in the range $max(k, m)-1$ to $(k + m - 2)$ where the $max(k, m)$ represents the parameter with the greatest value.
    \end{lemma}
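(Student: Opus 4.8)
The plan is to count the terms of the implied clause directly, using the explicit construction given in Lemma 5.9, and then bound that count above and below by accounting for the possible overlap between the two source clauses.

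First I would recall that by Lemma 5.9 the implied clause $E$ is composed of every term of $C$ and every term of $D$ except the two terms containing the shared terminal $t$. Since $t$ occupies exactly one slot in each clause, the clause of length $k$ contributes $k-1$ non-$t$ terms and the clause of length $m$ contributes $m-1$ non-$t$ terms, for a raw total of $(k-1)+(m-1) = k+m-2$ terms before any merging.

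For the upper bound, I would observe that the length of $E$ is largest precisely when none of these non-$t$ terms coincide; in that case no merging occurs and $E$ has exactly $k+m-2$ terms. For the lower bound, I would invoke the overlap handling from the proof of Lemma 5.9: any terminal shared between the non-$t$ terms of the two clauses is either of the same sign, in which case one copy is deleted, or of the opposite sign, in which case the whole clause becomes a tautology and is discarded. Since we are measuring the length of a genuinely produced (non-discarded) clause, every shared terminal must be a same-sign duplicate that removes exactly one term. Assuming without loss of generality that $k \le m$, the number of such duplicates cannot exceed the size of the smaller set of non-$t$ terms, namely $k-1$, because one set cannot share more terminals with another than it itself contains. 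Removing at most $k-1$ duplicates from the $k+m-2$ raw terms therefore leaves at least $(k+m-2)-(k-1) = m-1 = \max(k,m)-1$ terms.

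Combining the two bounds yields the stated range $\max(k,m)-1 \le |E| \le k+m-2$. I expect the only delicate step to be the lower bound: one must justify both that it is same-sign duplicates (rather than opposite-sign conflicts) which shorten a valid clause, and that at most $\min(k-1,m-1)$ of them can occur, since a set of $k-1$ distinct terminals cannot contribute more than $k-1$ coincidences with the other clause's terms. The upper bound and the surrounding arithmetic are routine.
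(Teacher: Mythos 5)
Your proposal is correct and takes essentially the same approach as the paper: both arguments start from the $k+m-2$ non-$t$ terms contributed by the two clauses and let the overlap between them determine where in the range the length falls, with zero overlap giving the maximum $k+m-2$ and maximal overlap giving the minimum $\max(k,m)-1$. Your write-up is in fact somewhat more careful than the paper's, which simply names the two extremal configurations, whereas you explicitly bound the number of same-sign duplicates by $\min(k,m)-1$ and dispose of the opposite-sign case via the tautology/discard convention.
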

    \begin{proof}
        The smallest clause that can be implied by clauses of length $k$ and $m$ using Lemma 5.9 occur when all but one of the terms in one clause exist in the other.

        As such, the unique terms will come from the clause that's longer.

        Removing $t$, you are left with 1 less than the maximum of $k$ and $m$.

        The largest clause can be implied if there are no terms shared between the two clauses. In this case you subtract $1$ from the length of each clause to account for $t$ and since no duplicates will be removed, the resulting clause's length is $2$ less than the sum of the lengths of 
        the clauses.
    \end{proof}

    \begin{lemma}
        Given the following:
        \begin{itemize}
            \item A clause, $A$, of length less than $k$
            \item A clause, $B$, of length less than $k$
            \item A clause, $C$, of length less than $k$
            \item A clause, $D$, of length $k$ or $k - 1$
            \item A clause, $E$, of length $k$
            \item $A$ and $B$ imply $E$ by Lemma 5.9
            \item $C$ and $E$ imply $D$ by Lemma 5.9
        \end{itemize}
        Then $A$, $B$, and $C$, imply $D$ by processing only clauses with a maximum length of $k - 1$.
    \end{lemma}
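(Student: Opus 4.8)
The natural plan is to prove this by \emph{reordering} the two applications of Lemma 5.9, bringing $C$ into the derivation before $A$ and $B$ have been fully combined, so that the length-$k$ clause $E$ is never actually materialized. First I would name the two resolvent terminals: let $s$ be the terminal on which $A$ and $B$ agree in opposite signs (so $E$ is their combination with the $s$-terms deleted), and let $r$ be the terminal on which $C$ and $E$ resolve to produce $D$. The pivotal observation is that $r$ occurs in $E$, and since every term of $E$ is inherited from $A$ or from $B$, the occurrence of $r$ that cancels against $C$ must already live in $A$ or in $B$; it cannot be $s$, since both signs of $s$ were deleted when $E$ was formed, so $r \neq s$.

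Assume without loss of generality that this occurrence of $r$ sits in $A$. I would then resolve $C$ against $A$ on $r$ first, obtaining an intermediate clause $A' := (C \setminus \{r\}) \cup (A \setminus \{\neg r\})$ (signs chosen appropriately), and afterwards resolve $A'$ against $B$ on $s$ — note $s$ survives in $A'$ because $s \neq r$ and $s$ is untouched by the first step. That the final clause of this reordered derivation is exactly $D$ is a routine set-algebra check: both orders produce $(C \setminus \{r\}) \cup (A \setminus \{s, \neg r\}) \cup (B \setminus \{\neg s\})$, with duplicate terms collapsed and any tautology discarded exactly as in the proof of the General Lemma 5.7. This soundness half I expect to be straightforward, modulo the edge cases where $s$ itself happens to occur in $C$, which must be absorbed into the same duplicate/tautology bookkeeping.

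The main obstacle is the \emph{length} half: showing the intermediate clause $A'$ never reaches length $k$. Lemma 5.10 alone is too weak, since it only caps $|A'|$ at $|A| + |C| - 2$, which may be as large as $2k - 4$. To hope for $|A'| \le k-1$ I would have to exploit the hypotheses $|E| = k$ and $|D| \le k$ quantitatively: by the overlap count behind Lemma 5.10, $|D| \le k$ forces all but at most one term of $C \setminus \{r\}$ to appear already in $E$, hence in $A \cup B$, and this is the only structural lever available to keep $A'$ short. The delicate point — and where I expect the real difficulty to concentrate — is that the reordering direction is \emph{not} free: if $C$ shares no oppositely-signed terminal with $B$, one cannot resolve $C$ against $B$ instead, so the order is forced onto $A$, and the length of $A'$ is then governed entirely by how much of $C \setminus \{r\}$ overlaps $A \setminus \{\neg r\}$ rather than $B$. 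In that forced case it is not at all clear that the $|D| \le k$ constraint prevents $A'$ from collecting enough fresh terms out of $C$ to reach length $k$. Pinning down (or, if necessary, repairing with an extra hypothesis) the bound $|A'| \le k-1$ in this forced case is therefore the make-or-break step of the whole argument, and it is exactly here that I would scrutinize the claim most carefully before trusting it.
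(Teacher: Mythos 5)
Your plan is not an alternative to the paper's argument --- it \emph{is} the paper's argument. The paper's proof does exactly what you describe: it resolves $C$ against $A$ first, forming the intermediate clause $F := [b, \beta, i, e, f, \phi]$ (your $A'$), then resolves $F$ against $B$ on $i$ to recover $D$, and finally tries to prove $|F| < k$ by an inclusion--exclusion count that exploits $|A|, |B|, |C| < k$ and $|E| = k$. Your soundness half (the pivot $r$ lies in $A$ or $B$, $r \neq s$, and the reordered derivation reproduces $D$ up to the usual duplicate/tautology bookkeeping) matches the paper step for step. The only divergence is the step you refused to trust: the paper claims to close the length bound, and you did not.

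Your distrust is vindicated: the bound $|A'| \le k-1$ is false, in exactly the forced case you isolated, and therefore the lemma itself is false. Take $k = 6$, $A := [a, b, c, d, i]$, $B := [c, d, e, f, -i]$, $C := [-a, e, f]$. All three have length less than $6$; resolving $A$ and $B$ on $i$ gives $E = [a, b, c, d, e, f]$ of length $k$; resolving $C$ and $E$ on $a$ gives $D = [b, c, d, e, f]$ of length $k-1$, so every hypothesis holds. Here $C$ shares no oppositely-signed terminal with $B$, so the order is forced onto $A$, and $C$'s fresh terms $e, f$ overlap $B$ rather than $A$: the intermediate clause is $A' = F = [b, c, d, e, f, i]$, of length exactly $k$. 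No repair is possible, because $D$ is not derivable at all from $A \land B \land C$ while processing only clauses of length at most $5$: one can check that no proper subclause of $D$ is implied by $A \land B \land C$ (so no expansion step can output $D$), and that no implied clause of length at most $5$ of the form $S \cup [a]$ or $S \cup [i]$ with $S \subseteq D$ exists, while resolving on a terminal of $D$ or on a terminal foreign to the instance is likewise ruled out (so no resolution step can output $D$ either). The paper's counting argument evades this case only by notational fiat: its formula for $k$ counts the named extras of $B$ (the terms $c, d, \delta$) and the named extras of $C$ (the terms $e, f, \phi$) as disjoint sets of terminals, and its key reduction to ``$i < c + d$'' is valid only under that disjointness --- which is precisely what fails in the example above and precisely the overlap you identified as the make-or-break point. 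So the gap in your proposal is real but unfixable without adding a hypothesis the statement does not have (e.g.\ that every term of $C$ other than the pivot already occurs in $A$); and since the same counting convention underlies Lemmas 5.12 and 5.17--5.19 and the Proof of Correctness, the defect you anticipated is a defect of the paper, not of your proposal.
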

    \begin{proof}
        
        Define the clauses in the following manner:

        A := [a, b, $\beta$, i]

        B := [c, d, $\delta$ -i]

        C := [-a, e, f, $\phi$]

        Then the following are derived by Lemma 5.9:
        
        $E = [a, b, \beta, c, d, \delta]$ (By $A$ and $B$)

        $D = [b, \beta, c, d, \delta, e, f, \phi]$ (By $C$ and $E$ or by $F$ and $B$)

        $F = [b, \beta, i, e, f, \phi]$ (By $A$ and $C$)

        Where $\beta, \delta, $ and $\phi$ are generic sets of terms in the instance.

        Consider the following figure:    
        
        \begin{figure}[h]
            \centering
            \includegraphics[scale=0.8]{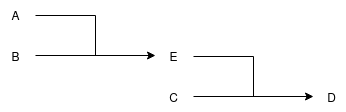}        
            \caption{A graph illustrating the derivations described by the lemma}
        \end{figure}

        Note that since $C$ and $E$ imply $D$, then $C$ and $E$ must share the same terminal such that it is positive in one clause and negated in the other.
        
        Since all of the terms in $E$ came from $A$ or $B$ (not including $i$), then such a term in $C$ must have the same term of the opposite form in $A$ or $B$.
        
        And since $A$ and $B$ are fixed yet arbitrary clauses treated in the same way, it does not matter which clause we pick as long as it is fixed for the rest of the proof. Let's pick a the terminal, $a$, from clause $A$. Now $C$ contains $-a$.
 
        Recall from Lemma 5.3 that if a clause blocks any assignments, it cannot contain the same term in both forms, so if $\beta, \delta$, or $\phi$ contain the same terminal in both forms then $D$ blocks no assignments and the resulting clause may be disregarded.

        We know $C$ is of length $k$ and there are two cases for $D$: (1) $D$ is of length $k$ or (2) $D$ is of length $k-1$.

        In the following equations, let the presence of a term represent a count of one, and the presence of a set of terms represent the number of terms in that set. If multiple sets of terms are shown in parentheses, let this represent the number of terms found in the intersection of both sets.

        Consider (1) the case where $D$ has length $k$ then we can define $k$ in terms of $D$:
        
        $k = b + c + d + e + f + \beta + \delta + \phi - (\beta \delta) - 
        (\beta \phi) - (\delta \phi) + (\beta \delta \phi)$

        length of $F = b + i + e + f + \beta + \phi - (\beta \phi)$

        Want to show length of $F$ is less than $k$

        $b + i + e + f + \beta + \phi - (\beta \phi) < 
        b + c + d + e + f + \beta + \delta + \phi - (\beta \delta) - 
        (\beta \phi) - (\delta \phi) + (\beta \delta \phi)$

        $\rightarrow$ $i + \beta + \phi - (\beta \phi) < 
        c + d + \beta + \delta + \phi - (\beta \delta) - 
        (\beta \phi) - (\delta \phi) + (\beta \delta \phi)$

        $\rightarrow$ $i < 
        c + d + \delta - (\beta \delta) - (\delta \phi) + (\beta \delta \phi)$

        As seen by using a Venn Diagram or other set intuition, $\delta - 
        (\beta \delta) - (\delta \phi) + (\beta \delta \phi)$, 
        represents the number of terminals in $\delta$ not in $\beta$
        and not in $\phi$.

        The lowest case for the right hand side of the inequality is where 
        this is 0, ie, all of the terms in $\delta$ are in $\beta$ or 
        $\phi$. In this case, the inequality becomes:

        $\rightarrow$ $i < c + d$

        Which is true as long as $c$ and $d$ exist in $B$.

        Want to show $c$ and $d$ always exists in $B$:

        Suppose not, then at most one of $c$ or $d$ exists.

        Recall we have the clauses:

        $A := [a, b, \beta, i]$
        
        $B := [c, d, \delta, -i]$
        
        $E := [a, b, \beta, c, d, \delta]$

        And since only $c$ or $d$ exist, we can redefine some clauses:

        $B := [x, -i]$
        
        $E := [a, b, \beta, x]$

        Where x is represents either $c$ or $d$, but not both.

        Note that no terms may exist in $\delta$ because any terms in $\delta$ could be extracted and treated as $c$ or $d$, but we know $x$ already represents the one of these terms that exist and the other term cannot exist.

        Notice the length of $A$ is the same as the length of $E$.

        This is a contradiction because the length of $A$ is given as less than $k$ and the length of $E$ is given as $k$.

        Therefore both $c$ and $d$ must exist and the inequality is true.

        Therefore $F$ is shorter than $k$ when $D$ is of length $k$.
    
        Consider (2) the case where the length of $D$ is $k - 1$:

        This means $k$ is one greater than the length of $D$, so $k$ is now:

        $k = b + c + d + e + f + \beta + \delta + \phi - (\beta \delta) - 
        (\beta \phi) - (\delta \phi) + (\beta \delta \phi) + 1$

        length of $F = b + i + e + f + \beta + \phi - (\beta \phi)$

        Want to show length of $F$ is less than $k$

        $b + i + e + f + \beta + \phi - (\beta \phi) < 
        b + c + d + e + f + \beta + \delta + \phi - (\beta \delta) - 
        (\beta \phi) - (\delta \phi) + (\beta \delta \phi) + 1$

        Similarly as before, the inequality will become:

        $\rightarrow$ $i < c + d + 1$

        Which is true as long as at least $c$ or $d$ exist.

        It was seen that both $c$ and $d$ must exist and since the proof does not rely on the length of $D$, the inequality is true.

        Therefore the length of $F$ is shorter than $k$ when the length of $D$ is $k - 1$.

        Since the length of $F$ is less than $k$ in all cases, you can derive $D$ by processing only clauses with a maximum length of $k-1$.
    \end{proof}

    \begin{lemma}
        Given the following:
        \begin{itemize}
            \item $A$ is a clause of length less than $k$
            \item $B$ is a clause of length $k$
            \item $C$ is a clause of length less than $k$
            \item $D$ is a clause of length $k$ or $k - 1$
            \item $A$ expands to imply $B$ by Lemma 5.8
            \item $B$ and $C$ imply $D$ by Lemma 5.9
        \end{itemize}
        Then $D$ can be implied by processing clauses of at most length $k - 1$.
    \end{lemma}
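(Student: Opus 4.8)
The plan is to mirror the rerouting argument of Lemma 5.11: rather than first building the length-$k$ clause $B$ and only afterward resolving it against $C$, I would commute the two operations so that the resolution (or a plain subsumption) is carried out before the expansion, thereby keeping every \emph{input} clause at length at most $k-1$ while still producing $D$ as the output. First I would unpack the hypotheses. By Lemma 5.8 the expansion of $A$ appends a single literal on a fresh terminal $t \notin A$, so $B = A \cup \{\ell\}$ with $\ell$ equal to $t$ or $-t$; since $|B| = k$ and $|A| < k$ this forces $|A| = k-1$. By Lemma 5.9, $B$ and $C$ must share a terminal $u$ that is positive in one and negated in the other, and $D$ is obtained by deleting the literal on $u$ from each clause and taking the union. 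Every terminal of $B$ is either a terminal of $A$ or the fresh terminal $t$, so the split into the cases $u = t$ and $u \in A$ is exhaustive.

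In the case $u = t$, the clause $C$ contains the literal opposite to $\ell$, and the resolvent reduces to $D = A \cup (C \setminus \{\text{that literal}\})$, so that $A \subseteq D$. By Lemma 5.6 (equivalently, through a sequence of Lemma 5.8 expansions of $A$) the single clause $A$ already implies $D$; since $|A| = k-1$ and $|D| \le k$, no input clause in this derivation exceeds length $k-1$. In the case $u \in A$, the clauses $A$ and $C$ themselves share $u$ with opposite signs, so I would resolve them directly by Lemma 5.9 to obtain $D' = (A \setminus \{\ell_u\}) \cup (C \setminus \{\ell_u'\})$, where $\ell_u, \ell_u'$ denote the opposite literals on $u$. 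One then checks $D = D' \cup \{\ell\}$: if $\ell \in D'$ then $D = D'$ is already the direct resolvent of the two short clauses, and if $\ell \notin D'$ then $|D'| = |D| - 1 \le k-1$, so a single expansion of $D'$ by $\ell$ (Lemma 5.8) yields $D$. In every branch the processed clauses $A$, $C$, and $D'$ have length at most $k-1$.

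The main obstacle I anticipate is not the logic of the case split but the bookkeeping on lengths together with the degenerate configurations. Using Lemma 5.10 I would confirm that the resolvent $D'$ in the second case is genuinely bounded by $k-1$: the identity $|D'| = |D| - 1$ holds once $\ell \notin D'$, but the overlap sub-case $\ell \in D'$ must be separated out and shown to deliver $D$ directly. I would also have to absorb the same edge cases handled inside Lemma 5.9 --- duplicate terms shared between $A$ and $C$ collapsing to a single copy, and a shared terminal of opposite form rendering the resolvent tautological, in which case by Lemma 5.3 it blocks no assignment and is discarded. These are routine once the commutation is in place, so the crux is simply the observation that the resolving terminal $u$ either coincides with the expansion terminal $t$ (letting us bypass the construction of $B$ altogether) or already lies in $A$ (letting us resolve before we expand).
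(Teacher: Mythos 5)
Your rerouting is, in structure, exactly the paper's own proof: the paper also splits on whether the terminal resolved between $B$ and $C$ lies in $A$ or among the appended literals, and in the first case resolves $A$ against $C$ directly and then expands the resolvent up to $D$ (its intermediate clause $E := [b, \beta, e, f, \phi]$ is your $D'$), while in the second case it observes $A \subseteq D$ and expands $A$ straight to $D$ by Lemma 5.8. The one substantive divergence is your opening move: you read ``$A$ expands to imply $B$ by Lemma 5.8'' as a \emph{single} application of Lemma 5.8, so that $B = A \cup \{\ell\}$ and $|A| = k-1$ is forced. The paper reads expansion in the looser sense of Lemma 5.6: its proof sets $A := [a, b, \beta]$ and $B := [a, b, \beta, c, d, \delta]$, so $B$ may exceed $A$ by arbitrarily many literals, and that is the generality actually needed where this lemma is invoked later (in the proof of correctness, $B$ arises by expanding a 3-terminal clause up to length $k$ with $k$ possibly much larger than $4$). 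As written, your proof covers only the boundary case $|A| = k-1$, which is a genuine restriction relative to what the paper proves and uses.

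The restriction is easy to lift, and doing so preserves your main advantage. Write $B = A \cup S$ with $S$ a set of literals on terminals outside $A$. If the resolved terminal lies outside $A$, then still $A \subseteq D$ and stepwise expansion of $A$ processes only clauses of length at most $|D| - 1 \le k - 1$. If it lies in $A$, the direct resolvent $D'$ of $A$ and $C$ satisfies $D = D' \cup S$, so either $S \subseteq D'$ and $D = D'$ is produced in one step from two clauses shorter than $k$, or $|D'| = |D| - |S \setminus D'| \le k - 1$ and stepwise expansion by the missing literals of $S$ finishes, with every processed clause again of length at most $k - 1$. Note that this identity-based length bookkeeping is cleaner than the paper's: the paper establishes $|E| < k$ via inclusion--exclusion formulas over $\beta, \delta, \phi$ together with a contradiction argument (if no extra terms existed then $A$ would equal $B$, contradicting $|A| < k = |B|$), all of which your set identity makes unnecessary. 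So: same route as the paper and lighter arithmetic, but you must state and prove the multi-literal version, since the single-literal hypothesis you adopted is strictly weaker than what the lemma is needed for.
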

    \begin{proof}

        Let the clauses be defined as follows:

        $A := [a, b, \beta]$
        
        $B := [a, b, \beta, c, d, \delta]$

        $C_1$ := [-a, e, f, $\phi$]

        $C_2$ := [-c, e, f, $\phi$]

        $D_1$ := [b, $\beta$, c, d, $\delta$, e, f, $\phi$]

        $D_2$ := [a, b, $\beta$, d, $\delta$, e, f, $\phi$]

        E := [b, $\beta$, e, f, $\phi$]

        Where $\beta, \delta$, and $\phi$ are fixed, yet arbitrary sets of terms such that the clauses block at least one assignment.

        Consider the following figure

        \begin{figure}[h]
            \centering
            \includegraphics[scale=0.8]{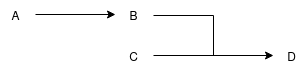}
            \caption{An illustration of the derivations described in the lemma}
        \end{figure}

        Notice that $C$ has to contain a term from $B$ in the opposite form by Lemma 5.9.

        Notice that $B$ is made up of terms from $A$ and terms not in $A$
        by Lemma 5.6.

        The term in $C$ which is opposite from the term in $B$ can therefore
        be opposite (1) from a term in $A$ (in this case, use $C_1$ and $D_1$) or (2) a term not in $A$ (in this case use $C_2$ and $D_2$).
        
        (1) Consider the opposite form term in $C$ is in $A$ (use $C_1$ and $D_1$):

        Recall the clauses:

        $A := [a, b, \beta]$
        
        $C_1$ := [-a, e, f, $\phi$]

        $D_1$ := [b, $\beta$, c, d, $\delta$, e, f, $\phi$]

        $E := [b, \beta, e, f, \phi]$

        Notice $A$ and $C_1$ share an opposite term, so they can derive a clause, $E$, by Lemma 5.9. 
        
        Notice all the terms in $E$ exist in $D_1$ so $E$ can be expanded to derive $D_1$ by Lemma 5.8.

        Consider the path of implication in the following figure:

        \begin{figure}[h]
            \centering
            \includegraphics[scale=0.8]{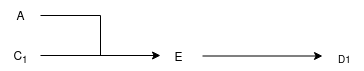}
            \caption{An illustration of the derivations by clauses $A$, $C_1$, and $E$}
        \end{figure}

        Want to show you only have to process clauses whose length is
        less than $k$ to derive $D_1$.

        Since $D_1$ is derived using only $A$, $C_1$, and $E$ and $A$ and $C$ are given to be shorter than $k$, want to show length of $E$ is less than $k$.

        Consider two cases: (1a) $D_1$ is of length $k$ and (1b) $D_1$ is of length $k - 1$
        
        Consider (1a) where $D_1$ is of length $k$:

        In the following equations, let the presence of a term represent a count of one, and the presence of a set of terms represent the number of terms in that set. If multiple sets of terms are shown in parentheses, let this represent the number of terms found in the intersection of both sets.

        Since $D_1$ is of length $k$, we can define $k$ as follows:

        $k = b + c + d + e + f + \beta + \delta + \phi - (\beta \delta) 
        - (\beta \phi) - (\delta \phi) + (\beta \delta \phi)$

        Length of $E = b + e + f + \beta + \phi - (\beta \phi)$
        
        Want to show the length of $E$ is less than $k$:

        $b + e + f + \beta + \phi - (\beta \phi) < b + c + d + e + f + 
        \beta + \delta + \phi - (\beta \delta) - (\beta \phi) - (\delta \phi) + (\beta \delta \phi)$

        $\rightarrow 0 < c + d + \delta - (\beta \delta) 
        - (\delta \phi) + (\beta \delta \phi)$

        Which is true as long as at least one term exists on the R.H.S.

        Want to show at least one term exists on the R.H.S.

        Suppose not, then no terms exist on the R.H.S. and we can redefine some clauses.

        Recall the original clause definitions:

        $A := [a, b, \beta]$

        $B := [a, b, \beta, c, d, \delta]$

        Since no terms exist on the R.H.S. we can redefine some clauses:

        $B := [a, b, \beta]$

        Note that no terms may exist in $\delta$ because any term in $\delta$ can be extracted and treated as $c$ or $d$ and those are explicitly removed from existence.

        Notice $A$ is exactly $B$.

        This is a contradiction because the length of $A$ is given as less than $k$ while the length of $B$ is given as $k$.

        Therefore at least one term must exist on the R.H.S. and the inequality holds.

        Therefore $E$ is shorter than $k$ when $D$ is of length $k$ and we use $C_1$ and $D_1$.

        Consider (1b) $D_1$ is of length $k - 1$:

        Then the length of $k$ is redefined as:

        $k = b + c + d + e + f + \beta + \delta + \phi - (\beta \delta) 
        - (\beta \phi) - (\delta \phi) + (\beta \delta \phi) + 1$

        Similarly as before, the inequality becomes:

        $\rightarrow 0 < c + d + 1$

        Which is always true so the length of $E$ is indeed less than $k$ when $D$ is of length $k - 1$ and we use $C_1$ and $D_1$.

        (2) Consider the case using $C_2$ and $D_2$:
        
        Recall the clauses:

        $A := [a, b, \beta]$

        $C_2 = [-c, e, f, \phi]$

        $D_2 = [a, b, \beta, d, \delta, e, f, \phi]$

        Since all of the terms in $A$ exist in $D_2$, we can expand $A$ to $D_2$ using Lemma 5.8

        Since the length of $A$ is given as less than $k$, we can derive $D$ by processing clauses with a maximum length of $k - 1$.
        
        Since the possible lengths of $D$ are $k$ or $k - 1$ and in both cases we derive $D$ by processing clauses with a maximum length of $k - 1$ then $D$ can always be derived by processing clauses with a maximum length of $k - 1$.
    \end{proof}

    \begin{lemma}
        Given an instance of 3SAT, you can expand all of the given clauses to the point where you are considering clauses of length $n$.
    \end{lemma}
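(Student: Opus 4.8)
The plan is to prove this by induction on the quantity $n - k$, where $n$ is the number of terminals in the instance and $k$ is the length of the clause currently under consideration. The goal is to show that any given clause can be repeatedly expanded until every resulting clause has length exactly $n$, meaning it contains all $n$ terminals, each in some form. The engine of the argument is Lemma 5.8, which lets us take any clause $C$ and any terminal $t$ not in $C$ and derive the two length-$(k+1)$ clauses obtained by appending $t$ and $-t$ to $C$, each of which is implied by $C$.

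First I would set up the base case: if a given clause already has length $n$, there is nothing to do, so the statement holds trivially. For the inductive step, I would suppose every clause of length $k+1$ can be expanded to clauses of length $n$, and consider a clause $C$ of length $k < n$. Since $C$ contains $k$ distinct terminals (terminals do not repeat within a clause, by the processing convention and Lemma 5.3) and the instance has exactly $n$ terminals, there must be at least one terminal $t$ not appearing in $C$. Applying Lemma 5.8 to $C$ and $t$ yields two clauses of length $k+1$, each implied by $C$; by the inductive hypothesis, each of these can be expanded to clauses of length $n$. Composing the implications, $C$ itself can be expanded to clauses of length $n$.

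The main point to verify, and the only place the argument could fail, is that a not-yet-present terminal $t$ always exists at each stage. This is guaranteed purely by counting: as long as the current clause has length strictly less than $n$, it uses fewer than all $n$ terminals, so at least one remains available to append. Once the length reaches $n$, every terminal is already present, and no further expansion is possible or needed. I would also note that because each appended terminal is chosen to be absent from the clause, the resulting length-$n$ clauses never contain a terminal in both forms, so by Lemma 5.3 they still block assignments and are not discarded.

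Finally, I would observe that fully expanding a clause of length $k$ in this way produces a binary tree of derivations whose $2^{n-k}$ leaves are exactly the length-$n$ clauses corresponding to every choice of sign for each of the $n-k$ appended terminals, all of which are implied by the original clause. Applying this to each given clause in turn establishes that all given clauses can be expanded to clauses of length $n$.
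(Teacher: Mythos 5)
Your proof is correct, but it takes a genuinely different route from the paper's. The paper disposes of this lemma in a single step: a given clause $A$ implies any $n$-terminal clause $B$ containing all of $A$'s terms directly by Lemma 5.6, with no induction and no intermediate clauses. You instead construct the length-$n$ clauses one terminal at a time via Lemma 5.8, inducting on $n-k$ and chaining implications through the intermediate lengths $k+1, k+2, \dots, n$; note that this tacitly relies on the transitivity of implication, which does follow at once from the paper's definition (the blocked-assignment sets are nested), but which you should state explicitly since the paper never records it as a lemma. What your approach buys: it is constructive in exactly the sense that the word \emph{expand} suggests, exhibiting the binary derivation tree that the exponential-time procedure would actually build; it checks that no intermediate clause ever acquires a terminal in both forms (so nothing is discarded under Lemma 5.3); and it delivers the count of $2^{n-k}$ distinct length-$n$ extensions as a byproduct --- a fact that Lemma 5.14 actually needs but that the paper's one-line proof of this lemma does not supply. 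What the paper's approach buys is brevity: one application of Lemma 5.6, with no terminal-availability counting and no composition of implications. Since Lemma 5.8 is itself proved from Lemma 5.6, the two arguments rest on the same foundation; yours is the unrolled, step-by-step version of the paper's direct one.
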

    \begin{proof}
        Given an instance of 3SAT, we know all clauses are of length 3.

        If we want to consider a generic $n$-terminal clause, $B$, that's implied by a given clause, $A$, then by Lemma 5.6 we know it's implied if all the terms in $A$ exist in $B$.
    \end{proof}

    \begin{lemma}
        If you expand given 3-t clauses as described in Lemma 5.13, you will
        derive $2^n$ unique clauses of length $n$ iff the instance is unsatisfiable.
    \end{lemma}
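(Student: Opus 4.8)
The plan is to exploit the tight correspondence between clauses of length $n$ and assignments, and to show that expansion transports this correspondence faithfully. First I would observe that by Lemma 5.4 with $k = n$, every clause of length $n$ blocks exactly $2^{n-n} = 1$ assignment, while by Lemma 5.2 there are exactly $2^n$ assignments in total. Since there are likewise exactly $2^n$ distinct length-$n$ clauses (each of the $n$ terminals occurring in either positive or negated form), the map sending each length-$n$ clause to the single assignment it blocks is a bijection. Consequently, ``deriving $2^n$ unique length-$n$ clauses'' is equivalent to ``deriving, for every assignment, the unique length-$n$ clause that blocks it.''

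Next I would analyze the effect of expanding a single given $3$-terminal clause $C$ as in Lemma 5.13. Applying Lemma 5.8 repeatedly to each of the $n-3$ terminals absent from $C$, the expansion of $C$ to length $n$ produces $2^{n-3}$ distinct length-$n$ clauses, since each appended terminal admits two sign choices. By Lemma 5.6 every assignment blocked by any such expansion is already blocked by $C$, and by the first step each expansion blocks exactly one assignment. Since $C$ itself blocks exactly $2^{n-3}$ assignments by Lemma 5.4, these $2^{n-3}$ singleton blocked assignments must coincide precisely with the $2^{n-3}$ assignments blocked by $C$. Hence expanding $C$ yields exactly the set of length-$n$ clauses corresponding, under the bijection, to the assignments that $C$ blocks — no more and no fewer.

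Taking the union over all given clauses, the collection of length-$n$ clauses obtained by expanding the whole instance corresponds under the bijection to exactly the set of assignments blocked by at least one given clause. I would then invoke the definitions directly: the instance is unsatisfiable iff no assignment satisfies it, which by Lemma 5.1 holds iff every assignment is blocked by some clause. Combining this with the union statement delivers both directions simultaneously: the instance is unsatisfiable iff every assignment is blocked iff every length-$n$ clause is derived iff all $2^n$ unique length-$n$ clauses appear.

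The main obstacle I anticipate is pinning down the word ``exactly'' in the single-clause step, namely proving that the expansions of $C$ cover all the assignments $C$ blocks and none outside them. The inclusion ``every expansion's blocked assignment lies among those blocked by $C$'' is immediate from Lemma 5.6, but the reverse coverage does not follow from expansion alone; it relies on the cardinality coincidence ($2^{n-3}$ distinct expansions against $2^{n-3}$ blocked assignments) forcing surjectivity. I would therefore be careful to verify that the $2^{n-3}$ expansions are genuinely distinct and that their blocked assignments are pairwise distinct, so that the counting argument legitimately upgrades the inclusion to an equality. The remaining bookkeeping over the union of given clauses is then routine.
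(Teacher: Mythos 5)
Your proposal is correct and follows essentially the same route as the paper: both rest on the correspondence between length-$n$ clauses and assignments (via Lemmas 5.2 and 5.4), the fact that the length-$n$ expansions of a given $3$-terminal clause are exactly the length-$n$ clauses blocking the assignments that the given clause blocks, and the characterization of unsatisfiability as every assignment being blocked. The only difference is organizational: you prove the per-clause correspondence by a counting (cardinality) argument and dispatch both directions of the iff at once through the union over given clauses, whereas the paper treats the two directions separately and justifies the key correspondence directly, by noting that the unique $n$-terminal clause blocking a blocked assignment must contain the three terms of whichever given clause blocks it.
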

    \begin{proof}
        Want to show an unsatisfiable instance $\implies$ $2^n$ unique n-terminal clauses can be derived from the given 3-t clauses:

        By lemma 5.4, a clause of length $n$ blocks 1 assignment. 

        Recall an instance is unsatisfiable iff all $2^n$ assignments are
        blocked.

        If a 3-terminal clause blocks an assignment, then it also implies the corresponding n-terminal assignment because there is one possible n-terminal clause for any given assignment.

        Notice that for each of these n-terminal clauses, they must contain three terms from at least one given clause. If they didn't, then the assignment blocked by that n-terminal clause would not be blocked and the instance would be satisfiable.

        Since each n-terminal clause blocks one assignment, blocking all assignments requires $2^n$ n-terminal clauses.

        Want to show $2^n$ unique n-terminal clauses are derived by the given 3-t clauses $\implies$ then the instance is unsatisfiable.

        By lemma 5.4, a clause of length $n$ blocks 1 assignment. 

        Therefore if there are $2^n$ unique n-terminal clauses, then
        all $2^n$ assignments will be blocked.

        Note that there will be no overlap because each n-terminal clause
        sets the value for each terminal and overlap would imply the same
        terminal having two values by the same assignment which is impossible.
    \end{proof}

    \begin{lemma}
        The n-terminal clauses described in Lemma 5.14 can be reduced to derive any pair of contradicting 1-terminal clauses.
    \end{lemma}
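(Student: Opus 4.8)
The plan is to recognize that Lemma 5.7 (Reduction) is precisely the resolution rule: two clauses of equal length that agree on all but one terminal, with that terminal appearing positive in one clause and negated in the other, imply the clause consisting of their shared terms. Starting from the $2^n$ $n$-terminal clauses guaranteed by Lemma 5.14, each of which fixes a sign for every terminal, the strategy is to eliminate terminals one at a time by repeated application of Lemma 5.7 until only a single target terminal remains.

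First I would establish the key invariant: if one possesses the complete set of all $2^m$ full clauses over some set $S$ of $m$ terminals (where a full clause contains each terminal of $S$ exactly once in some sign), then resolving on any chosen terminal $t \in S$ yields the complete set of all $2^{m-1}$ full clauses over $S \setminus \{t\}$. To see this, fix any assignment of signs to the $m-1$ terminals of $S \setminus \{t\}$; among the $2^m$ clauses there are exactly two that carry those signs, one with $t$ positive and one with $t$ negated. These two clauses share $m-1$ identical terms and differ only in the sign of $t$, so Lemma 5.7 applies and produces the $(m-1)$-terminal clause over $S \setminus \{t\}$ with precisely those signs. Since this holds for each of the $2^{m-1}$ sign patterns, every full clause over $S \setminus \{t\}$ is derived.

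With the invariant in hand, I would argue by induction. The base case is the hypothesis supplied by Lemma 5.14: the full set of $2^n$ $n$-terminal clauses. Applying the invariant $n-1$ times, each time eliminating a terminal other than a fixed target $x_i$, reduces the collection to the complete set of full clauses over the singleton $\{x_i\}$. But the only two full clauses over one terminal are $[x_i]$ and $[-x_i]$, which by definition form a pair of contradicting 1-terminal clauses. Since the target $x_i$ is arbitrary and the elimination order is unconstrained, any desired pair of contradicting 1-terminal clauses can be obtained.

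The main obstacle is justifying completeness at every level rather than merely deriving some shorter clauses: one must confirm that no $(m-1)$-terminal clause is skipped and that each required resolution step has both of its inputs available. The invariant resolves this by exhibiting, for each target $(m-1)$-terminal clause, the unique pair of $m$-terminal clauses that produce it, both present by the inductive hypothesis. A secondary point to verify is that the preconditions of Lemma 5.7 hold at each step, namely that the two clauses being resolved have equal length and share exactly $m-1$ terms while differing in sign only on the eliminated terminal; this is immediate from the definition of a full clause, so I expect it to be routine.
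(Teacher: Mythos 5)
Your proposal is correct and follows essentially the same approach as the paper: eliminate terminals one at a time, pairing each full clause with its twin differing only in the sign of the eliminated terminal and applying Lemma 5.7, while maintaining the invariant that the complete set of full clauses over the remaining terminals is available. Your version merely states this invariant explicitly and proves it by induction, which makes rigorous what the paper phrases informally as ``repeat this process while never selecting the same terminal twice.''
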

    \begin{proof}
        Given $2^n$ n-terminal clauses, want to show you can imply any pair of contradicting 1-terminal clauses by lemma 5.7.

        Algorithm: 

        Pick a terminal that will not exist in the final 1-terminal clauses.

        Notice half of the existing n-terminal clauses have that terminal assigned the value of False and the other half have that terminal assigned the value of True.

        Pick one clause that blocks an assignment where the terminal is True.

        Then there exists an assignment for each possible value for the remaining n-1 terminals.

        Therefore, there must exist another clause that shares all of the same terms, but where that one terminal is assigned the value of False.

        Using these two clauses, we can create a new clause by lemma 5.7.

        Now all of the clauses of length n - 1 do not contain that terminal.

        Repeat this process while never selecting the same terminal twice
        until you are left with two contradicting 1-terminal clauses.

        Each clause is guaranteed to have a matching clause since every possible combination of terminal assignments exists and when you use Lemma 5.7 to create new clauses, the rest of all of the clauses remain the same except the selected terminal is removed.
    \end{proof}

    \begin{lemma}
        Contradicting 1-terminal clauses can be expanded to imply every possible clause.
    \end{lemma}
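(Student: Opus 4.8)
The plan is to fix an arbitrary clause $D$ and show that the contradicting pair $[t]$ and $[-t]$ together imply $D$; since $D$ is arbitrary, this establishes the lemma. The guiding intuition is that $[t]$ blocks exactly the assignments setting $t$ to False while $[-t]$ blocks exactly those setting $t$ to True, so between them the pair blocks all $2^n$ assignments. Because any clause $D$ blocks only a subset of the assignments, every assignment blocked by $D$ is already blocked by one of the two contradicting clauses, which is precisely what it means for the pair to imply $D$. To keep the argument inside the expansion/reduction framework already developed, however, I would give an explicit constructive derivation rather than lean on this counting observation alone.

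Write $D := [\ell_1, \ell_2, \ldots, \ell_m]$, where each $\ell_j$ is a literal and, by Lemma 5.3, no terminal occurs in both forms (otherwise $D$ blocks nothing and may be disregarded during processing). First I would dispose of the easy case in which the terminal of $t$ already appears in $D$: then all terms of $[t]$ (or of $[-t]$) already sit inside $D$, so Lemma 5.6 gives the implication immediately. In the remaining case the terminal $t$ does not appear in $D$, and I would repeatedly apply Lemma 5.8 to $[t]$, appending $\ell_1, \ell_2, \ldots, \ell_m$ one at a time by induction on $m$, to derive the clause $[\ell_1, \ldots, \ell_m, t]$; symmetrically, expanding $[-t]$ yields $[\ell_1, \ldots, \ell_m, -t]$. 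These two clauses both have length $m+1$, share the $m$ identical terms $\ell_1, \ldots, \ell_m$, and differ only in the terminal $t$, which is positive in one and negated in the other. Lemma 5.7 (Reduction) then implies the clause $[\ell_1, \ldots, \ell_m]$, which is exactly $D$.

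The final step is to chain these implications together. Implication is transitive, being inclusion of blocked-assignment sets, so since $[t]$ implies $[\ell_1, \ldots, \ell_m, t]$ and $[-t]$ implies $[\ell_1, \ldots, \ell_m, -t]$, and the latter pair implies $D$, the original pair implies $D$ as well. I expect the main obstacle to be not the core idea but the bookkeeping that keeps every hypothesis satisfied: I must verify that each expansion step adds a terminal genuinely absent from the current clause (guaranteed because the terminals of $D$ are distinct and none equals $t$ in this case), that the reduction hypotheses of Lemma 5.7 match exactly (both clauses have length $k = m+1$ and share $k-1 = m$ identical terms, with only $t$ opposite), and that the notion of a pair of clauses implying $D$, namely that the blocked set of $D$ is contained in the union of the blocked sets of the pair, composes correctly along the chain.

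A secondary point to pin down is the precise scope of ``every possible clause'': it should be read as every clause that blocks at least one assignment, since a clause repeating a terminal in both forms is vacuous by Lemma 5.3 and is excluded by the reformatting conventions. Once that reading is fixed, the construction above produces each such clause, and the lemma follows.
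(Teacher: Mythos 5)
Your proof is correct and follows essentially the same route as the paper's: split on whether the terminal $t$ occurs in the target clause, dispose of that case by expansion, and otherwise expand both unit clauses and resolve on $t$. The only cosmetic difference is that you expand both unit clauses symmetrically to length $m+1$ and finish with the reduction lemma (Lemma 5.7), whereas the paper expands asymmetrically to $[a,b]$ and $[-a,c,d,\ldots]$ and finishes with the more general Lemma 5.9.
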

    \begin{proof}
        Let the following clauses be a pair of contradicting 1-t clauses:

        $[a]$

        $[-a]$

        Any possible clause could either contain $a$, $-a$, or neither.

        By lemma 5.8, we can expand to any clause that contains $a$ or $-a$.

        Now want to show these clauses imply another clause that does not contain  $a$ or $-a$.

        Let the following be a generic k-terminal clause that does not contain $a$ or $-a$:

        $[b, c, d, ...]$

        By Lemma 5.8, we know the 1-terminal clauses imply the following clauses:

        $[a, b]$

        $[-a, c, d, ...]$

        By Lemma 5.9, these clauses imply:

        $[b, c, d, ...]$

        Therefore a set of contradicting 1-terminal clauses can imply any clause containing $a, -a$, or neither, which encompasses every possible clause.
    \end{proof}

    \begin{lemma}
        Given the following:
        \begin{itemize}
            \item A, B, C, and D, are clauses shorter than k
            \item E and F are clauses of length k
            \item G is a clause of length k or k - 1
            \item A and B imply E by Lemma 5.9
            \item C and D imply F by Lemma 5.9
            \item E and F imply G by Lemma 5.9
        \end{itemize}
        Then G can be implied by processing clauses with a maximum length of k - 1.
    \end{lemma}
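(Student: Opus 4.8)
The plan is to reduce this to Lemma 5.11, whose hypotheses describe exactly the situation of one length-$k$ resolvent (itself built from two short clauses) resolving with a short clause to produce the target. The only obstruction to invoking Lemma 5.11 outright is that here \emph{both} inputs to the final resolution, $E$ and $F$, have length $k$. The idea is therefore to reorder the three applications of Lemma 5.9 so that the two length-$k$ resolvents never coexist: I would push the final pivot back onto one of the short leaf clauses, turning the balanced derivation tree into a left-leaning chain whose leaves are the short clauses $A,B,C,D$ and which contains only a single length-$k$ intermediate clause.

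First I would name the pivots: let $t_1$ be the terminal resolved when $A$ and $B$ imply $E$, let $t_2$ be the terminal resolved when $C$ and $D$ imply $F$, and let $t_3$ be the terminal resolved when $E$ and $F$ imply $G$. Since $t_3$ occurs in $E$, by Lemma 5.9 it descends from $A$ or $B$, and $-t_3$ in $F$ descends from $C$ or $D$. Because $A$ and $B$ play symmetric roles (and likewise $C$ and $D$), I may assume without loss of generality that $t_3$ lies in $A$ and $-t_3$ lies in $C$. Writing each clause as its pivots together with a residual set, $A=[\bar A, t_1, t_3]$, $B=[\bar B, -t_1]$, $C=[\bar C, t_2, -t_3]$, $D=[\bar D, -t_2]$, I get $E=[\bar A,\bar B,t_3]$, $F=[\bar C,\bar D,-t_3]$, and $G=[\bar A,\bar B,\bar C,\bar D]$.

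Next I would reorder the derivation: resolve $A$ with $C$ on $t_3$ to obtain $P=[\bar A,\bar C,t_1,t_2]$, then resolve $P$ with $B$ on $t_1$ to obtain $Q=[\bar A,\bar B,\bar C,t_2]$, and finally resolve $Q$ with $D$ on $t_2$ to recover $G$. Each step is an application of Lemma 5.9, duplicate terms are merged and any clause carrying a terminal in both forms is discarded by Lemma 5.3, and the surviving term set is identical to $G$. Crucially, the last two steps ($P,B$ imply $Q$; then $Q,D$ imply $G$) match the hypotheses of Lemma 5.11 with $Q$ cast as the length-$k$ resolvent and $D$ as the short clause. So, provided $P$ has length at most $k-1$ and $Q$ has length $k$, Lemma 5.11 carries out those two steps using only clauses of length at most $k-1$, while the opening step $A,C\to P$ already involves only short clauses; if instead $|Q|\le k-1$ the whole chain is short and Lemma 5.11 is not even needed.

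The hard part will be the length bookkeeping that justifies $|P|\le k-1$. The leverage comes from the hypotheses $|E|=|F|=k$ together with $|G|\le k$: setting $U=\bar A\cup\bar B$ and $V=\bar C\cup\bar D$, these give $|U|=|V|=k-1$ and $|U\cup V|=|G|\le k$, so inclusion--exclusion forces $|U\cap V|\ge k-2$, meaning the two parents share all but at most one of their residual terms. I would feed this overlap into the same signed term count used in Lemmas 5.11 and 5.12, writing each clause length as a sum over the intersections of $\bar A,\bar B,\bar C,\bar D$ and showing that, after cancellation, the inequality $|P|<k$ reduces to the existence of one or two forced terms, exactly as the earlier lemmas reduced to claims such as $i<c+d$. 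I expect the genuine difficulty to be a short case analysis on which parent actually supplies $t_3$ and, correspondingly, whether the economical reordering runs through $B$ and $D$ as above or through the symmetric route when the shared overlap sits on the opposite side; as before, the degenerate sub-cases in which a residual set would be empty are eliminated by deriving a contradiction with the given bounds $|A|,|B|,|C|,|D|<k$ and $|E|=|F|=k$.
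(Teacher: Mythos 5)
Your derivation is the paper's own, just renamed: your chain ``resolve $A$ with $C$ on $t_3$ to get $P$, fold in $B$ to get $Q$, fold in $D$ to get $G$'' is exactly the paper's $H$, $I$, $J$, and the only substantive difference is that the paper verifies $|H|<k$ and $|I|<k$ by explicit signed term counts, while you verify only $|P|\le k-1$ and delegate the tail of the chain to Lemma 5.11. That delegation does not save you, because the step you defer as ``the hard part'' is in fact false, and your case split on $|Q|$ (either $|Q|=k$, feeding Lemma 5.11, or $|Q|\le k-1$, trivial) is not exhaustive. Concretely, take $k=5$ with $A=[x_1,x_2,t_1,t_3]$, $B=[x_3,x_4,-t_1]$, $C=[x_1,x_5,t_2,-t_3]$, $D=[x_2,x_3,-t_2]$. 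All four are shorter than $5$; resolving gives $E=[x_1,x_2,x_3,x_4,t_3]$, $F=[x_1,x_2,x_3,x_5,-t_3]$ (both of length $5$) and $G=[x_1,x_2,x_3,x_4,x_5]$ (length $5$), so every hypothesis of the lemma holds. Yet $P=[x_1,x_2,x_5,t_1,t_2]$ has length $k$, not $k-1$, and $Q=[x_1,x_2,x_3,x_4,x_5,t_2]$ has length $k+1$, which neither of your two cases accepts (folding in $D$ before $B$ does not help: that intermediate clause is $[x_1,x_2,x_3,x_5,t_1]$, again of length $k$). Your inclusion--exclusion bound $|U\cap V|\ge k-2$ is correct but bounds the wrong quantity: the forced overlap between $U=\bar A\cup\bar B$ and $V=\bar C\cup\bar D$ may sit in $\bar B$ and $\bar D$ (here $x_2\in\bar A\cap\bar D$ and $x_3\in\bar B\cap\bar D$), so it does nothing to shrink $P=\bar A\cup\bar C\cup\{t_1,t_2\}$.

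No amount of sharper bookkeeping can close this gap, because the same example defeats the lemma itself, not merely both proofs. The resolution closure of $\{A,B,C,D\}$ above is easy to enumerate: besides the four given clauses it contains only clauses of length at least $5$. Hence every non-tautological clause of length at most $4$ implied by the instance must contain all the terms of one of $A$, $B$, $C$, $D$, and in particular a literal on $t_1$, $t_2$, or $t_3$. Since $G$ contains no such literal, $G$ cannot be obtained by expansion (Lemma 5.8) from any processable clause; and if $G$ were the Lemma 5.9 resolvent of two implied clauses of length at most $4$, each parent would contain one of $A,B,C,D$, and a short check shows one pivot can never remove all the $t$-literals: a parent containing $A$ or $C$ carries two distinct $t$-terminals, parents containing $B$ and $D$ respectively carry $-t_1$ and $-t_2$, and two parents both containing $B$ (or both containing $D$) share three literals and are jointly too small to cover $G$ plus a pivot pair. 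So in this instance $G$ provably cannot be derived while processing only clauses of length at most $k-1$: the lemma is false as stated. The root cause, common to your sketch and to the paper, is the silent assumption that the displayed singleton terms of the four clauses (the paper's $b,c,d,f,g,h$) are pairwise distinct terminals lying outside $\beta,\delta,\phi,\gamma$; the counterexample violates exactly that ($b=x_1\in\phi$, $g=x_2\in\beta$, $h=c=x_3$), which is why the paper's ``at least three terms exist on the R.H.S.'' argument appears to go through while its conclusion $|H|<k$ fails.
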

    \begin{proof}
        Consider the following figure:

        \begin{figure}[h]
            \centering
            \includegraphics[scale=0.8]{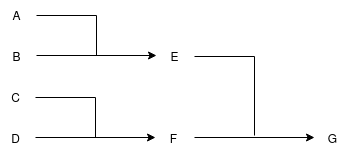}
            \caption{An illustration of the derivations described in the lemma}
        \end{figure}

        Let the clauses be defined as follows:

        $A := [a, b, \beta, i]$

        $B := [c, d, \delta, -i]$
        
        $C := [-a, f, \phi, j]$
        
        $D := [g, h, \gamma, -j]$

        Then the following are implied by Lemma 5.9:

        $E = [a, b, \beta, c, d, \delta]$
        
        $F = [-a, f, \phi, g, h, \gamma]$
        
        $G = [b, \beta, c, d, \delta, f, \phi, g, h, \gamma]$
        
        Where $\beta, \delta, \phi,$ and $\gamma$ are fixed yet arbitrary sets of terms such that the clauses block at least one assignment.

        Notice that $E$ and $F$ have to share a term of the opposite form in order to imply $G$ by Lemma 5.9. 
        
        All of the terms in $E$ and $F$ came from $A, B, C,$ and $D$. 
        
        Since $A, B, C,$ and $D$ are all arbitrary clauses, selecting which term to negate does not have an effect on the outcome as long as one form of the term exists in $E$ and the other form exists in $F$.
        
        Here the opposite term is shared between $A$ and $D$, but any term that appears positive in $A$ or $B$ and negated in $C$ or $D$ or vice versa will yield the same results.
 
        Want to show $G$ can be derived by processing clauses with a maximum length of $k - 1$.

        Define additional implications by Lemma 5.9:

        $H = [b, \beta, i, f, \phi, j]$ (By clauses $A$ and $C$)

        $I = [c, d, \delta, b, \beta, f, \phi, j]$ (By clauses $B$ and $H$)

        $J = [g, h, \gamma, c, d, \delta, b, \beta, f, \phi]$ (By clauses $D$ and $I$)

        Notice $J$ is equivalent to $G$.

        Want to show all clauses used to derive $J$ are shorter than $k$.

        Want to show $A$, $C$, $B$, $H$, $D$, and $I$ are shorter than $k$.

        It was given that $A$, $B$, $C$, and $D$ are shorter than $k$ so just want to show $H$ and $I$ are shorter than $k$.

        Want to show (1) $H$ is shorter than $k$ and (2) $I$ is shorter than $k$

        (1) Want to show $H$ is shorter than $k$

        Two cases to consider (1a) $G$ is of length $k$ and (1b) $G$ is of length $k - 1$

        (1a) $G$ is of length $k$

        In the following equations, let the presence of a term represent a count of one, and the presence of a set of terms represent the number of terms in that set. If multiple sets of terms are shown in parentheses, let this represent the number of terms found in the intersection of both sets.

        Since $G$ is of length $k$ we can define $k$ as follows:

        $k = b + c + d + f + g + h
            + \beta + \delta + \phi + \gamma
            - (\beta \delta) - (\beta \phi) - (\beta \gamma) - (\delta \phi) - (\delta \gamma) -(\phi \gamma)
            + (\beta \delta \phi) + (\beta \delta \gamma) + (\beta \phi \gamma) + (\delta \phi \gamma)
            - (\beta \delta \phi \gamma)
        $

        Length of $H = b + i + f + j + \beta + \phi - (\beta \phi)$
        
        Want to show the length of $H$ is less than $k$:

        $b + i + f + j + \beta + \phi - (\beta \phi)$
        $<$
        $b + c + d + f + g + h
            + \beta + \delta + \phi + \gamma
            - (\beta \delta) - (\beta \phi) - (\beta \gamma) - (\delta \phi) - (\delta \gamma) -(\phi \gamma)
            + (\beta \delta \phi) + (\beta \delta \gamma) + (\beta \phi \gamma) + (\delta \phi \gamma)
            - (\beta \delta \phi \gamma)
        $

        $\rightarrow$
        $i + j$
        $<$
        $c + d + g + h
            + \delta + \gamma
            - (\beta \delta) - (\beta \gamma) - (\delta \phi) - (\delta \gamma) -(\phi \gamma)
            + (\beta \delta \phi) + (\beta \delta \gamma) + (\beta \phi \gamma) + (\delta \phi \gamma)
            - (\beta \delta \phi \gamma)
        $

        Which is true as long as at least three terms exist on the R.H.S.

        Consider the following cases:

        \begin{itemize}
            \item No terms exist on the R.H.S.
            \item Exactly one term exists on the R.H.S.
            \item Exactly two terms exist on the R.H.S.
        \end{itemize}

        Consider case 1 where no terms exist on the R.H.S.

        Recall the clauses:

        $A := [a, b, \beta, i]$

        $B := [c, d, \delta, -i]$

        $E := [a, b, \beta, c, d, \delta]$

        But since no terms exist on the R.H.S., specifically $c$ and $d$ don't exist, we can redefine the clauses:

        $B := [-i]$

        $E := [a, b, \beta]$

        Notice that no terms may exist in $\delta$ because any term in $\delta$ can be extracted and treated as $c$ or $d$, but we defined these to not exist.

        Notice the length of $E$ is shorter than the length of $A$.

        This is a contradiction because the length of $A$ is given as less than $k$ while the length of $E$ is given as $k$.

        Therefore this case is impossible and at least one term must exist on the R.H.S.
 
        Consider case 2 where exactly one term exists on the R.H.S.

        Recall the clauses:

        $A := [a, b, \beta, i]$

        $B := [c, d, \delta, -i]$

        $E := [a, b, \beta, c, d, \delta]$

        Recall at least one term must exist on the R.H.S., specifically either $c$ or $d$ must exist, but not both.

        We know $c$ or $d$ must exist because if neither exist, there's a contradiction (see case 1).

        Since these terms are treated in the same way, let's pick $c$ to be the term that exists.

        We can redefine some clauses as follows:

        $B := [c, -i]$

        $E := [a, b, \beta, c]$

        Notice the length of $E$ is the same as the length of $A$.

        This is a contradiction because the length of $A$ is given as less than $k$ while the length of $E$ is given as $k$.

        Therefore at least two terms must exist on the R.H.S.

        Consider case 3 where exactly two terms exists on the R.H.S.

        Notice how if both $c$ and $d$ do not exist, then a contradiction is reached. 

        Therefore both $c$ and $d$ must exist and the rest of the terms on the R.H.S. may not exist.

        Recall we have the clauses:

        $C := [-a, f, \phi, j]$

        $D := [g, h, \gamma, -j]$

        $F := [-a, f, \phi]$

        But since no terms on the R.H.S. exist besides $c$ and $d$, then $g$, $h$, and $\gamma$ may not exist. 

        Notice that no terms in $\gamma$ may exist because if any terms in $\gamma$ exist, then they can be extracted and treated as $g$ or $h$.

        Then we can redefine the clauses:

        $D := [-j]$

        $F := [-a, f, \phi]$

        Notice that $F$ is shorter than $C$.

        This is a contradiction because the length of $C$ is given as less than $k$ while the length of $F$ is given as $k$.

        Therefore at least three terms must exist on the R.H.S.

        Since at least three terms must exist on the R.H.S., the inequality is true.

        Therefore $H$ is shorter than $k$ when the length of $G$ is $k$.

        (1b) $G$ is of length $k - 1$

        If $G$ is of length $k - 1$, then k is defined as:

        $k = b + c + d + f + g + h
            + \beta + \delta + \phi + \gamma
            - (\beta \delta) - (\beta \phi) - (\beta \gamma) - (\delta \phi) - (\delta \gamma) -(\phi \gamma)
            + (\beta \delta \phi) + (\beta \delta \gamma) + (\beta \phi \gamma) + (\delta \phi \gamma)
            - (\beta \delta \phi \gamma)
            + 1
        $

        Length of $H = b + i + f + j + \beta + \phi - (\beta \phi)$

        Similarly to before, the inequality becomes:

        $\rightarrow$
        $i + j$
        $<$
        $c + d + g + h
            + \delta + \gamma
            - (\beta \delta) - (\beta \gamma) - (\delta \phi) - (\delta \gamma) -(\phi \gamma)
            + (\beta \delta \phi) + (\beta \delta \gamma) + (\beta \phi \gamma) + (\delta \phi \gamma)
            - (\beta \delta \phi \gamma)
            + 1
            $

        Which is true as long as at least two terms exist on the R.H.S.

        It was already shown that at least three terms must exist on the R.H.S. and the proof does not rely on the length of $G$ so the inequality holds.

        Therefore $H$ is always shorter than $k$. 

        (2) Want to show $I$ is shorter than $k$

        Two cases to consider (2a) $G$ is of length $k$ and (2b) $G$ is of length $k - 1$

        (2a) $G$ is of length $k$

        Since $G$ is of length $k$ we can define $k$ as follows:

        $k = b + c + d + f + g + h
            + \beta + \delta + \phi + \gamma
            - (\beta \delta) - (\beta \phi) - (\beta \gamma) - (\delta \phi) - (\delta \gamma) -(\phi \gamma)
            + (\beta \delta \phi) + (\beta \delta \gamma) + (\beta \phi \gamma) + (\delta \phi \gamma)
            - (\beta \delta \phi \gamma)
        $

        Length of $I = c + d + b + f + j
        + \delta + \beta + \phi
        - (\delta \beta) - (\delta \phi) - (\beta \phi)
        + (\delta \beta \phi)
        $

        Want to show the length of $I$ is less than $k$:

        $c + d + b + f + j
        + \delta + \beta + \phi
        - (\delta \beta) - (\delta \phi) - (\beta \phi)
        + (\delta \beta \phi)
        $
        $<$
        $b + c + d + f + g + h
            + \beta + \delta + \phi + \gamma
            - (\beta \delta) - (\beta \phi) - (\beta \gamma) - (\delta \phi) - (\delta \gamma) -(\phi \gamma)
            + (\beta \delta \phi) + (\beta \delta \gamma) + (\beta \phi \gamma) + (\delta \phi \gamma)
            - (\beta \delta \phi \gamma)
        $

        $\rightarrow$
        $j
        $
        $<$
        $g + h
            + \gamma
            - (\beta \gamma) - (\delta \gamma) -(\phi \gamma)
            + (\beta \delta \gamma) + (\beta \phi \gamma) + (\delta \phi \gamma)
            - (\beta \delta \phi \gamma)
        $

        Which is true as long as at least two terms exist on the R.H.S.

        Suppose not, then a maximum of one term exists on the R.H.S.

        Recall the clauses

        $C := [-a, f, \phi, j]$

        $D := [g, h, \gamma, -j]$

        $F := [-a, f, \phi, g, h, \gamma]$

        Due to the restriction of a maximum of one term existing on the R.H.S., we can redefine some clauses:

        $D := [x, -j]$

        $F := [-a, f, \phi, x]$

        Where $x$ represents a maximum of one term.

        Note that if $x$ was more than one term, then the two terms could be used as $g$ and $h$, but we know these do not exist.

        Notice the length of $F$ is at most the length of $C$.

        This is a contradiction because the length of $C$ is given as less than $k$ while the length of $F$ is given as $k$

        Therefore at least two terms exist on the R.H.S. and the inequality is true.

        Therefore I is shorter than $k$ when $G$ is of length $k$.

        (2b) $G$ is of length $k - 1$

        Similarly as before, we define k in terms of the length of G:

        $k = b + c + d + f + g + h
            + \beta + \delta + \phi + \gamma
            - (\beta \delta) - (\beta \phi) - (\beta \gamma) - (\delta \phi) - (\delta \gamma) -(\phi \gamma)
            + (\beta \delta \phi) + (\beta \delta \gamma) + (\beta \phi \gamma) + (\delta \phi \gamma)
            - (\beta \delta \phi \gamma)
            + 1
        $

        Length of $I = c + d + b + f + j
        + \delta + \beta + \phi
        - (\delta \beta) - (\delta \phi) - (\beta \phi)
        + (\delta \beta \phi)
        $

        Similarly to before, the inequality becomes
        
        $\rightarrow$
        $j$
        $<$
        $g + h
            + \gamma
            - (\beta \gamma) - (\delta \gamma) -(\phi \gamma)
            + (\beta \delta \gamma) + (\beta \phi \gamma) + (\delta \phi \gamma)
            - (\beta \delta \phi \gamma)
            + 1
        $

        Which is true as long as at least one term exists on the R.H.S.

        It was already shown at least two terms exist on the R.H.S. and the proof does not rely on the length of $G$ so the inequality is true.

        Therefore, the length of $I$ is shorter than $k$ when $G$ is of length $k - 1$.

        Since $G$ can be derived by processing only $A$, $B$, $C$, $D$, $H$, and $I$ and all of those clauses are shorter than $k$, we can derive $D$ by processing clauses with a maximum length of $k - 1$.
    \end{proof}

    \begin{lemma}
        Given the following:
        \begin{itemize}
            \item a clause, $A$, of length less than $k$
            \item a clause, $B$, of length less than $k$
            \item a clause, $C$, of length less than $k$
            \item a clause, $D$, of length $k$
            \item a clause $E$, of length $k$
            \item a clause $F$, of length $k - 1$ or $k$
            \item A and B imply D by Lemma 5.9
            \item C expands to E by Lemma 5.8
            \item D and E imply F by Lemma 5.9
        \end{itemize}
        Then $F$ can be implied by only processing clauses with a maximum length of $k-1$.
    \end{lemma}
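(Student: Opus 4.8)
The plan is to reproduce the rerouting philosophy of Lemmas 5.11 and 5.12: the two length-$k$ clauses $D$ and $E$ are never actually formed, and instead the operations are commuted so that every clause fed into a reduction or expansion has length at most $k-1$. Write the reduction $A,B \to D$ on the shared terminal $i$, so that $A := [a, b, \beta, i]$, $B := [c, d, \delta, -i]$, and $D = [a, b, \beta, c, d, \delta]$; let $C$ expand to $E$ by appending a set of fresh terms to $C$. The final step $D, E \to F$ is a reduction on some terminal $t$ that is positive in $D$ and negated in $E$ (up to relabelling). Since every term of $D$ comes from $A$ or $B$, the positive occurrence of $t$ lies in $A$ or $B$; the decisive question, exactly as in Lemma 5.12, is whether $-t$ already occurs in $C$ or was introduced by the expansion. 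I would split on these two cases, noting first that $C$ cannot contain $+t$ (otherwise Lemma 5.8 forbids adding $-t$, and $-t \notin E$ contradicts the reduction).

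In the case where $-t$ is one of the fresh terms added when $C$ expands to $E$, the terminal $t$ does not occur in $C$ at all. Then deleting $-t$ from $E$ removes none of $C$'s terms, so $C \subseteq E \setminus \{-t\} \subseteq F$. Because $C$ has length less than $k$ and $C \subseteq F$ with $|F| \le k$, Lemma 5.8 lets me expand $C$ directly to $F$, every intermediate clause having length at most $k-1$ and the only possible length-$k$ clause being the final output $F$ itself. If expanding ever forces a terminal and its negation into one clause, that clause blocks nothing by Lemma 5.3 and is discarded. This mirrors the second case of Lemma 5.12, where the expanded clause already sits inside the target.

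In the remaining case $-t$ occurs in $C$, so $C$ and $D$ share the opposite terminal $t$ and reduce (Lemma 5.9) to $Y := (D \setminus \{t\}) \cup (C \setminus \{-t\})$. The key observation is that $Y \subseteq F$, whence $|Y| \le |F| \le k$, while reducing the length-$k$ clause $D$ against a shorter clause gives $|Y| \ge k-1$ by Lemma 5.10; thus $|Y| \in \{k-1, k\}$. I can therefore invoke Lemma 5.11 with $A$, $B$, $C$ in their stated roles, our $D = \mathrm{reduce}(A,B)$ serving as its intermediate length-$k$ clause and $Y = \mathrm{reduce}(C, D)$ serving as its target clause of length $k$ or $k-1$, obtaining $Y$ while processing only clauses of length at most $k-1$. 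If $|Y| = k$, then $Y \subseteq F$ together with $|F| \le k$ forces $F = Y$, so $F$ is already derived; otherwise $|Y| = k-1$ and I expand $Y$ to $F$ by Lemma 5.8, again with every input of length at most $k-1$.

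The step I expect to be the main obstacle is the length bookkeeping in this second case: one must confirm $Y \subseteq F$, pin $|Y|$ to $\{k-1, k\}$ so that Lemma 5.11 applies with the very same $k$, and then separate the boundary $|Y| = k$ (where $F$ coincides with $Y$ and no expansion occurs) from $|Y| = k-1$. Verifying that the fresh terms used to build $E$ do not quietly collapse these counts, and that any clause carrying a terminal together with its negation is dropped under Lemma 5.3, is the part that will demand the most care.
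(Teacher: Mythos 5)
Your proposal is correct, and it uses the same two-case decomposition as the paper: split on whether the terminal $t$ eliminated in the $D,E \to F$ reduction has its opposite form already present in $C$, or was introduced only when $C$ expanded to $E$ (your observation that $+t \in C$ is impossible, so these cases are exhaustive, matches the paper's implicit setup). Your ``fresh term'' case is exactly the paper's second case: $C \subseteq F$, so expand $C$ directly to $F$ by Lemma 5.8. Where you genuinely diverge is the hard case. The paper re-derives everything from scratch there: it forms $G$ (reducing $A$ against $C$), then $H$ (reducing $G$ against $B$), expands $H$ to $F$, and spends the bulk of the proof establishing $|G| < k$ and $|H| < k$ via inclusion--exclusion bookkeeping and contradiction arguments (redefining clauses when too few terms exist). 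You instead form $Y = \mathrm{reduce}(C,D)$ --- which is precisely the paper's $H$ --- pin $|Y| \in \{k-1,k\}$ cheaply from Lemma 5.10 and the containment $Y \subseteq F$, and then invoke Lemma 5.11 as a black box, with the given $D$ playing its length-$k$ intermediate clause and $Y$ its target; the finish ($Y = F$ when $|Y| = k$, one expansion by Lemma 5.8 otherwise) is routine. Since Lemma 5.11's own proof internally constructs the paper's $G$ and $H$, the underlying derivation is identical; what your route buys is economy and robustness --- the fragile counting arguments live in one already-proven lemma instead of being duplicated, and you never need the paper's stronger claim that $|H| < k$ outright, since your argument tolerates $|Y| = k$. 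What the paper's version buys is self-containedness and stylistic symmetry with its companion Lemmas 5.17 and 5.19. Both arguments share the same caveat, which you flag explicitly and the paper handles by assumption: any derived clause containing a terminal in both forms blocks no assignments (Lemma 5.3) and must be discarded.
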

    \begin{proof}
        Consider the following figure:

        \begin{figure}[h]
            \centering
            \includegraphics[scale=0.8]{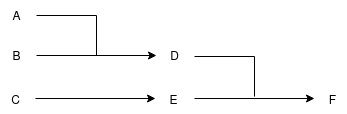}
            \caption{An illustration of the derivations described by the lemma}
        \end{figure}

        Let the following clauses be defined:

        $A := [a, b, \beta, i]$

        $B := [c, d, \delta, -i]$

        $C_1 := [-a, f, \phi]$
        
        $C_2 := [e, f, \phi]$
        
        Then the following clauses are implied:
        
        $D = [a, b, \beta, c, d, \delta]$
        
        $E_1 = [-a, f, \phi, g, h, \gamma]$
        
        $E_2 = [e, f, \phi, -a, h, \gamma]$
        
        $F_1 = [b, \beta, c, d, \delta, f, \phi, g, h, \gamma]$
        
        $F_2 = [b, \beta, c, d, \delta, e, f, \phi, h, \gamma]$
        
        Where $\beta, \delta, \phi$ and $\gamma$ are generic sets of terms
        such that A, B, and C block at least one clause by Lemma 5.3.
        
        Notice E and D must share a term of the opposite form and there are two possible cases for what this term is: 
            \begin{itemize}
                \item it exists in C (use $C_1$, $E_1$, and $F_2$)
                \item it does not exist in C (use $C_2$, $E_2$, and $F_2$)
            \end{itemize}
        Either way the opposite term must exist in A or B since it must exist in D and D is composed of terms from A or B.
        
        Since A and B are both generic clauses treated in the same way, it does not matter which clause the term exists in as long as it is fixed.
        
        Let's pick $A$ to contain the opposite form term from $C$.
        
        Consider (1) the opposite form term exists in C (use $C_1$, $E_1$, and $F_1$).

        Want to show $F_1$ can be derived by processing clauses with a maximum length of k - 1.

        Let the following clauses be defined:

        $G = [b, \beta, i, f, \phi]$ (By clause $A$ and $C_1$ with Lemma 5.9)

        $H = [b, \beta, f, \phi, c, d, \delta]$ (By clause $G$ and $B$ with Lemma 5.9)

        Since all of the terms in $H$ exist in $F_1$, Lemma 5.8 can be used to derive $F_1$ from $H$.

        Want to show (1a) $G$ is shorter than $k$ and (1b) $H$ is shorter than $k$

        (1a) $G$ is shorter than $k$

        Consider two cases (1ai) $F_1$ is of length $k$ and (1aii) $F_1$ is of length $k-1$.

        (1ai) $F_1$ is of length $k$

        In the following equations, let the presence of a term represent a count of one, and the presence of a set of terms represent the number of terms in that set. If multiple sets of terms are shown in parentheses, let this represent the number of terms found in the intersection of both sets.
        
        Recall $F_1$ is of length $k$ so $k$ can be defined as follows:
        
        $k = b + c + d + f + g + h 
        + \beta + \delta + \phi + \gamma
        - (\beta \delta) - (\beta \phi) - (\beta \gamma) - (\delta \phi) - (\delta \gamma) - (\phi \gamma)
        + (\beta \delta \phi) + (\beta \delta \gamma) + (\beta \phi \gamma) + (\delta \phi \gamma)
        - (\beta \delta \phi \gamma)
        $

        Length of $G = b + i + f + \beta + \phi - (\beta \phi)$

        Want to show length of $G$ is less than $k$:

        $b + i + f + \beta + \phi - (\beta \phi)$
        $<$
        $b + c + d + f + g + h 
        + \beta + \delta + \phi + \gamma
        - (\beta \delta) - (\beta \phi) - (\beta \gamma) - (\delta \phi) - (\delta \gamma) - (\phi \gamma)
        + (\beta \delta \phi) + (\beta \delta \gamma) + (\beta \phi \gamma) + (\delta \phi \gamma)
        - (\beta \delta \phi \gamma) 
        $

        $\rightarrow$ $i$
        $<$
        $c + d + g + h 
         + \delta + \gamma
        - (\beta \delta) - (\beta \gamma) - (\delta \phi) - (\delta \gamma) - (\phi \gamma)
        + (\beta \delta \phi) + (\beta \delta \gamma) + (\beta \phi \gamma) + (\delta \phi \gamma)
        - (\beta \delta \phi \gamma) 
        $

        Which is true as long as at least two terms exist on the R.H.S.

        Want to show at least two terms exist on the R.H.S.

        Suppose not, then at most one term exists on the R.H.S.

        Recall the clauses

        $A := [a, b, \beta, i]$

        $D := [a, b, \beta, c, d, \delta]$

        We can redefine some clauses since at most one term may exists on the R.H.S.

        $D := [a, b, \beta, x]$

        Where $x$ represents one term from $c$, $d$, or $\delta$.

        Note that if $x$ was more than one term, the two terms could be extracted and treated as $c$ and $d$, but we know at most one of these terms exist.
        
        Note the length of $D$ is the same as the length of $A$.

        This is a contradiction because the length of $A$ is given as less than $k$ while the length of $D$ is given as $k$.

        Therefore at least two terms exist on the R.H.S. and the inequality is true.

        Therefore $G$ is shorter than $k$ when using $F_1$ and the length of $F_1$ is $k$.

        (1aii) $F_1$ is of length $k - 1$

        Recall $F_1$ is of length $k - 1$ so $k$ can be defined as follows:
        
        $k = b + c + d + f + g + h 
        + \beta + \delta + \phi + \gamma
        - (\beta \delta) - (\beta \phi) - (\beta \gamma) - (\delta \phi) - (\delta \gamma) - (\phi \gamma)
        + (\beta \delta \phi) + (\beta \delta \gamma) + (\beta \phi \gamma) + (\delta \phi \gamma)
        - (\beta \delta \phi \gamma)
        + 1
        $

        Length of $G = b + i + f + \beta + \phi - (\beta \phi)$

        Want to show length of $G$ is less than $k$:

        $b + i + f + \beta + \phi - (\beta \phi)$
        $<$
        $b + c + d + f + g + h 
        + \beta + \delta + \phi + \gamma
        - (\beta \delta) - (\beta \phi) - (\beta \gamma) - (\delta \phi) - (\delta \gamma) - (\phi \gamma)
        + (\beta \delta \phi) + (\beta \delta \gamma) + (\beta \phi \gamma) + (\delta \phi \gamma)
        - (\beta \delta \phi \gamma) 
        + 1
        $

        Similarly as before, the inequality will become:

        $\rightarrow$ $i$
        $<$
        $c + d + g + h 
         + \delta + \gamma
        - (\beta \delta) - (\beta \gamma) - (\delta \phi) - (\delta \gamma) - (\phi \gamma)
        + (\beta \delta \phi) + (\beta \delta \gamma) + (\beta \phi \gamma) + (\delta \phi \gamma)
        - (\beta \delta \phi \gamma) 
        + 1
        $

        Which is true as long as at least one term exists on the R.H.S.

        We already showed at least two terms exist on the R.H.S. and the proof does not rely on the length of $F_1$.

        Therefore at least one term exists on the R.H.S. and the inequality is true.

        Therefore the length of $G$ is less than $k$ when using $F_1$ and the length of $F_1$ is $k - 1$.

        (1b) Want to show $H$ is shorter than $k$.

        We have two cases to consider: (1bi) $F_1$ is of length $k$ and (1bii) $F_1$ is of length $k-1$

        Consider (1bi) $F_1$ is of length $k$

        Recall $F_1$ is of length $k$, so we can define $k$ as follows:

        $k = b + c + d + f + g + h 
        + \beta + \delta + \phi + \gamma
        - (\beta \delta) - (\beta \phi) - (\beta \gamma) - (\delta \phi) - (\delta \gamma) - (\phi \gamma)
        + (\beta \delta \phi) + (\beta \delta \gamma) + (\beta \phi \gamma) + (\delta \phi \gamma)
        - (\beta \delta \phi \gamma)
        $

        Length of $H = b + f + c + d 
        + \beta + \delta + \phi
        - (\beta \delta) - (\beta \phi) - (\delta \phi)
        + (\beta \delta \phi) $

        Want to show the length of $H$ is less than $k$.

        $b + f + c + d 
        + \beta + \delta + \phi
        - (\beta \delta) - (\beta \phi) - (\delta \phi)
        + (\beta \delta \phi) $
        $<$
        $b + c + d + f + g + h 
        + \beta + \delta + \phi + \gamma
        - (\beta \delta) - (\beta \phi) - (\beta \gamma) - (\delta \phi) - (\delta \gamma) - (\phi \gamma)
        + (\beta \delta \phi) + (\beta \delta \gamma) + (\beta \phi \gamma) + (\delta \phi \gamma)
        - (\beta \delta \phi \gamma)
        $

        $\rightarrow$
        $b + f + c + d$
        $<$
        $b + c + d + f + g + h 
        + \gamma
        - (\beta \gamma) - (\delta \gamma) - (\phi \gamma)
        + (\beta \delta \gamma) + (\beta \phi \gamma) + (\delta \phi \gamma)
        - (\beta \delta \phi \gamma)
        $

        $\rightarrow$
        $0$
        $<$
        $g + h 
        + \gamma
        - (\beta \gamma) - (\delta \gamma) - (\phi \gamma)
        + (\beta \delta \gamma) + (\beta \phi \gamma) + (\delta \phi \gamma)
        - (\beta \delta \phi \gamma)
        $

        Which is true if at least one term exists on the R.H.S.

        Want to show at least one term exists on the R.H.S.

        Suppose not, then no terms exist on the R.H.S.

        Recall the clauses

        $C_1 := [-a, f, \phi]$

        $E_1 := [-a, f, \phi, g, h, \gamma]$

        Since no terms exist on the R.H.S., we can redefine some clauses:

        $E_1 := [-a, f, \phi]$

        Note that no other terms may exist in $E_1$ because any other terms could be used as $g$ or $h$, but we know these don't exist.

        Notice $E_1$ is exactly $C_1$.

        This is a contradiction because the length of $C$ is given as less than $k$ while the length of $E$ is given as $k$.

        Therefore at least one term exists on the R.H.S. and the inequality is true.

        Therefore $H$ is shorter than $k$ when using $F_1$ and the length of $F_1$ is $k$.

        Consider (1bii) $F_1$ is of length $k-1$

        Since $F_1$ is of length $k - 1$, we can define $k$ as follows:

        $k = b + c + d + f + g + h 
        + \beta + \delta + \phi + \gamma
        - (\beta \delta) - (\beta \phi) - (\beta \gamma) - (\delta \phi) - (\delta \gamma) - (\phi \gamma)
        + (\beta \delta \phi) + (\beta \delta \gamma) + (\beta \phi \gamma) + (\delta \phi \gamma)
        - (\beta \delta \phi \gamma)
        + 1
        $

        Length of $H = b + f + c + d 
        + \beta + \delta + \phi
        - (\beta \delta) - (\beta \phi) - (\delta \phi)
        + (\beta \delta \phi) $

        Want to show the length of H is less than k.

        $b + f + c + d 
        + \beta + \delta + \phi
        - (\beta \delta) - (\beta \phi) - (\delta \phi)
        + (\beta \delta \phi) $
        $<$
        $b + c + d + f + g + h 
        + \beta + \delta + \phi + \gamma
        - (\beta \delta) - (\beta \phi) - (\beta \gamma) - (\delta \phi) - (\delta \gamma) - (\phi \gamma)
        + (\beta \delta \phi) + (\beta \delta \gamma) + (\beta \phi \gamma) + (\delta \phi \gamma)
        - (\beta \delta \phi \gamma)
        + 1
        $

        Similarly to before, this will derive 

        $\rightarrow$
        $0$
        $<$
        $g + h 
        + \gamma
        - (\beta \gamma) - (\delta \gamma) - (\phi \gamma)
        + (\beta \delta \gamma) + (\beta \phi \gamma) + (\delta \phi \gamma)
        - (\beta \delta \phi \gamma)
        + 1
        $

        Which is always true.
        
        Therefore the length of $H$ is less than $k$ when using $F_1$ and the length of $F_1$ is $k - 1$.

        Consider (2) the opposite form term does not exist in $C$, use $C_2$, $E_2$, and $F_2$.

        Recall we have the clauses:

        $C_2 := [e, f, \phi]$

        $F_2 := [b, \beta, c, d, \delta, e, f, \phi, h, \gamma]$

        Notice all of the terms in $C_2$ exist in $F_2$ so $C_2$ can expand to $F_2$ using Lemma 5.9.

        Since the length of $C$ is given as less than $k$, we can derive $F_2$ by processing clauses with a maximum length of $k - 1$.

        Since $F_1$ and $F_2$ are all possible cases of $F$ and they can be derived without processing clauses longer than length $k-1$, $F$ can be derived by processing only clauses with a maximum length of $k-1$.
    \end{proof}

    \begin{lemma}
        Given the following:
        \begin{itemize}
            \item a clause $A$, of length less than k
            \item a clause $B$, of length less than k
            \item a clause $C$, of length k
            \item a clause $D$, of length k
            \item a clause $E$, of length k or k - 1
            \item $A$ expands to $C$ by lemma 5.8
            \item $B$ expands to $D$ by lemma 5.8
            \item $C$ and $D$ imply $E$ by lemma 5.7
        \end{itemize}
        Then $E$ can be derived by processing clauses whose length is at most k - 1.
    \end{lemma}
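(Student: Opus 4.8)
The plan is to avoid ever resolving the two length-$k$ clauses $C$ and $D$ against each other, and instead to reach $E$ directly from the short clauses $A$ and $B$. First I would fix the pivot: since $C$ and $D$ imply $E$ by Lemma 5.9, they must share a terminal $t$ that is positive in one clause and negated in the other; say $t$ appears in $C$ and $-t$ appears in $D$, so that $E$ consists of the terms of $C$ other than $t$ together with the terms of $D$ other than $-t$. Because $A$ expands to $C$ and $B$ expands to $D$, every term of $A$ occurs in $C$ and every term of $B$ occurs in $D$, i.e. $A \subseteq C$ and $B \subseteq D$.

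The key observation is to split on where the pivot lives, giving three exhaustive cases. (i) If $t$ is not a term of $A$, then every term of $A$ survives into $C$ with the pivot removed, hence into $E$; so $A \subseteq E$ and, by Lemma 5.8, $A$ expands directly to $E$. (ii) If $t$ is a term of $A$ but $-t$ is not a term of $B$, then symmetrically $B \subseteq E$ and $B$ expands directly to $E$. (iii) If $t$ is a term of $A$ and $-t$ is a term of $B$, then $A$ and $B$ already share the pivot in opposite forms, so by Lemma 5.9 they resolve to a clause $E_0$ made of the terms of $A$ other than $t$ together with the terms of $B$ other than $-t$; since $A \subseteq C$ and $B \subseteq D$ we get $E_0 \subseteq E$, and Lemma 5.8 then expands $E_0$ to $E$.

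The length bookkeeping is what makes the rerouting pay off, and I would verify it case by case. In (i) and (ii) the expansion starts from a clause of length at most $k-1$ (namely $A$ or $B$) and climbs to $E$; every clause fed into a Lemma 5.8 step is one term shorter than its output, so the longest clause ever processed has length $|E|-1 \le k-1$. In (iii) the resolution inputs $A$ and $B$ have length at most $k-1$, and the resolvent $E_0$ is a subset of $E$: if $E_0 = E$ we are already done, and otherwise $|E_0| < |E| \le k$, so $|E_0| \le k-1$ and the subsequent expansion again never processes a clause longer than $|E|-1 \le k-1$. In every case the target $E$ is only ever the output of a step, never an input, so it is derived while processing clauses of maximum length $k-1$.

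The main obstacle I anticipate is the case analysis on the pivot rather than any computation: one must argue that when the pivot enters $C$ (resp.\ $D$) only through expansion, the whole of $A$ (resp.\ $B$) reappears in $E$, and that the resulting subset clause is genuinely non-tautological. I would dispose of the latter by the Lemma 5.3 convention: if $A$, $B$, or the resolvent ever carried a terminal in both forms, the corresponding clause would block no assignment and $E$ itself would be a tautology, so it is disregarded and the claim holds vacuously. This structural argument replaces the inclusion--exclusion term counting used in the analogous Lemmas 5.17 and 5.18, so the only real care needed is confirming that the three cases are exhaustive and that each reroute's longest input clause is bounded by $k-1$.
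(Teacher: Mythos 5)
Your proposal is correct and follows the same skeleton as the paper's own proof: both split on where the resolution pivot sits relative to $A$ and $B$ (in neither, in exactly one, or in both), and both dispose of the first two cases by observing that $A$ (resp.\ $B$) is contained in $E$ and therefore expands directly to $E$ by Lemma 5.8, so only clauses of length at most $k-1$ are ever processed. The genuine difference is in the third case, where the pivot appears in both $A$ and $B$. The paper also resolves $A$ against $B$ there to get a short clause $F$ (your $E_0$), but it establishes $|F|<k$ by writing $k$ as an inclusion--exclusion sum over the term sets of $E$, reducing to an inequality of the form $0 < \text{R.H.S.}$, and then arguing by contradiction that the R.H.S. is nonzero (otherwise $C$ would coincide with $A$, contradicting $|A| < k = |C|$). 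Your argument replaces all of that bookkeeping with the structural observation that $E_0 \subseteq E$, so either $E_0 = E$ (done, having processed only $A$ and $B$) or $|E_0| < |E| \le k$; this is shorter, avoids the counting conventions entirely, and loses no rigor. You also make explicit two points the paper leaves implicit: that the derivation finishes with an expansion chain from $E_0$ to $E$ whose intermediate clauses all have length at most $|E|-1 \le k-1$, and that tautological clauses are dismissed by the Lemma 5.3 convention. Both routes prove the same statement; yours is the cleaner treatment of the one case where the two proofs diverge.
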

    \begin{proof}
        Consider the following figure:

        \begin{figure}[h]
            \centering
            \includegraphics[scale=0.8]{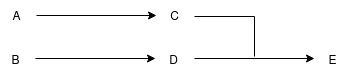}
            \caption{An illustration of the derivations described in this lemma}
        \end{figure}

        In the following clause definitions let $\beta, \delta, \phi,$ and $\gamma$ be fixed yet arbitrary set of terms in which the clauses block at least one assignment by Lemma 5.3.

        In order for $C$ and $D$ to imply E by Lemma 5.9, they have to share a terminal which is positive in one clause and negated in the other.

        Note that $C$ is composed of terms in $A$ and terms not in $A$.

        Similarly $D$ is composed of terms in $B$ and terms not in $B$.

        There are 3 cases for this opposite form term existing in regards to $A$ and $B$:
        \begin{enumerate}
            \item the opposite form term does not exist in $A$ or $B$
            \item the opposite form term exists in either $A$ or $B$ but not both
            \item the opposite form term exists in $A$ and $B$
        \end{enumerate}
        
        Consider case (1) the opposite form term does not exist in A or B.

        Then we have the clauses:

        $A := [a, b, \beta]$

        $B := [c, d, \delta]$

        $C := [a, b, \beta, e, f, \phi]$

        $D := [c, d, \delta, -e, h, \gamma]$

        $E := [a, b, \beta, f, \phi, c, d, \delta, h, \gamma]$

        Notice all of the terms in $A$ exist in $E$ so we can expand $A$ to $E$ by Lemma 5.8.

        Since the length of $A$ is given as less than $k$, we can derive $E$ by processing clauses with a maximum length of $k - 1$.

        Consider case (2) the opposite form term exists in either A or B.

        Since $C$ and $D$ are treated the same, let's say the opposite form term exists in $D$ and $A$.
        
        Now we have the clauses:

        $A := [a, b, \beta]$

        $B := [c, d, \delta]$

        $C := [a, b, \beta, e, f, \phi]$

        $D := [c, d, \delta, -a, h, \gamma]$

        $E := [b, \beta, e, f, \phi, c, d, \delta, h, \gamma]$

        Notice all of the terms in $B$ exist in $E$ so we can expand $B$ to $E$ by Lemma 5.8.

        Since the length of $B$ is given as shorter than $k$, we can derive $E$ by processing clauses with a maximum length of $k - 1$.

        Consider case (3) the opposite form term exists in A and B.

        In this case, we redefine the clauses as follows:

        $A := [a, b, \beta]$

        $B := [-a, d, \delta]$

        $C = [a, b, \beta, e, f, \phi]$

        $D = [-a, d, \delta, g, h, \gamma]$

        $E = [b, \beta, e, f, \phi, d, \delta, g, h, \gamma]$

        In this case, by Lemma 5.9, $A$ and $B$ can imply a new clause:

        $F = [b, \beta, d, \delta]$

        Want to show $F$ is shorter than $k$.

        Consider two cases (3a) the length of $E$ is $k$ and (3b) the length of $E$ is $k-1$.

        Consider (3a) the length of $E$ is $k$.

        We can define $k$ as follows:

        $k = b + e + f + d + g + h 
            + \beta + \phi + \delta + \gamma
            - (\beta \phi) - (\beta \delta) - (\beta \gamma) - (\phi \delta) - (\phi \gamma) - (\delta \gamma)
            + (\beta \phi \delta) + (\beta \phi \gamma) + (\phi \delta \gamma)
            - (\beta \phi \delta \gamma)
            $

        And the length of $F$ is:

        Length of $F = b + d + \beta + \delta - (\beta \delta)$

        Want to show the length of $F$ is less than $k$:

        $b + d + \beta + \delta - (\beta \delta)$
        $<$
        $b + e + f + d + g + h 
            + \beta + \phi + \delta + \gamma
            - (\beta \phi) - (\beta \delta) - (\beta \gamma) - (\phi \delta) - (\phi \gamma) - (\delta \gamma)
            + (\beta \phi \delta) + (\beta \phi \gamma) + (\phi \delta \gamma)
            - (\beta \phi \delta \gamma)
        $

        $\rightarrow$
        $0$
        $<$
        $e + f + g + h 
            + \phi + \gamma
            - (\beta \phi) - (\beta \gamma) - (\phi \delta) - (\phi \gamma) - (\delta \gamma)
            + (\beta \phi \delta) + (\beta \phi \gamma) + (\phi \delta \gamma)
            - (\beta \phi \delta \gamma)
        $

        Which is true as long as at least one term exists on the R.H.S.

        Want to show at least one term exists on the R.H.S.

        Suppose not, then no terms exist on the R.H.S.

        Recall we have the clauses

        $A := [a, b, \beta]$

        $C := [a, b, \beta, e, f, \phi]$

        Since the terms on the R.H.S. do not exist, we can redefine the clause:

        $C := [a, b, \beta]$

        Notice no more terms may exist in $C$ because any new term could be treated as $e$ or $f$ and we know those don't exist.

        Notice $C$ is exactly $A$.

        This is a contradiction because the length of $A$ is given as less than $k$ and the length of $C$ is given as $k$.

        Therefore at least one term must exist on the R.H.S. and the inequality is true.

        Therefore $F$ is shorter than $k$ when $E$ is of length $k$.

        Consider (3b) the length of $E$ is $k - 1$.

        We can define $k$ as follows:

        $k = b + e + f + d + g + h 
            + \beta + \phi + \delta + \gamma
            - (\beta \phi) - (\beta \delta) - (\beta \gamma) - (\phi \delta) - (\phi \gamma) - (\delta \gamma)
            + (\beta \phi \delta) + (\beta \phi \gamma) + (\phi \delta \gamma)
            - (\beta \phi \delta \gamma)
            + 1
            $

        And the length of $F$ is:

        Length of $F = b + d + \beta + \delta - (\beta \delta)$

        Want to show the length of $F$ is less than $k$:

        $b + d + \beta + \delta - (\beta \delta)$
        $<$
        $b + e + f + d + g + h 
            + \beta + \phi + \delta + \gamma
            - (\beta \phi) - (\beta \delta) - (\beta \gamma) - (\phi \delta) - (\phi \gamma) - (\delta \gamma)
            + (\beta \phi \delta) + (\beta \phi \gamma) + (\phi \delta \gamma)
            - (\beta \phi \delta \gamma)
            + 1
        $

        Similarly to before, this will derive

        $\rightarrow$
        $0$
        $<$
        $e + f + g + h 
            + \phi + \gamma
            - (\beta \phi) - (\beta \gamma) - (\phi \delta) - (\phi \gamma) - (\delta \gamma)
            + (\beta \phi \delta) + (\beta \phi \gamma) + (\phi \delta \gamma)
            - (\beta \phi \delta \gamma)
            + 1
        $

        Which is always true.

        Therefore $F$ is shorter than $k$ when the length of $E$ is $k - 1$.

        Since we can derive $E$ by processing clauses with a maximum length of $k - 1$ for all possible cases of the lemma, then the lemma holds.
    \end{proof}

    \section{Algorithm}

    \begin{enumerate}
        \item For each clause in the instance, C, of length 3 or less:
        \begin{enumerate}
            \item For each clause in the instance, D, of length 3 or less:
            \begin{enumerate}
                \item Get all clauses implied by C and D according to Lemma 5.9 
                and add them to the instance
                \item Check if this new clause is in the instance and update a flag accordingly
            \end{enumerate}
            \item Expand C to get all possible clauses with a maximum length of 3 and add them to the instance
            \item For each new clause from the previous step
            \begin{enumerate}
                \item Check if the new clause is in the instance
            \end{enumerate}
        \end{enumerate}
        \item For each clause in the instance, E, of length 1:
        \begin{enumerate}
            \item For each clause in the instance, F, of length 1:
            \begin{enumerate}
                \item if E and F contain the same terminal in which it is 
                positive in one clause and negated in the other, the 
                clauses are contradicting and the instance is unsatisfiable, end
            \end{enumerate}
        \end{enumerate}
        \item Repeat (1)-(2) until no new clauses are added
        \item If it reaches here, the instance is satisfiable, end
    \end{enumerate}

    \section{Time Complexity Analysis}

    In this section I will analyze the time complexity of the algorithm in section 4

    (1) - $O(n^3)$ - At most ${\binom{n}{3}}*8 + {\binom{n}{2}}*4 + {\binom{n}{1}}*2$ clauses to iterate which is on the order of $O(n^3)$

    (1.a) - $O(n^3)$ -  At most ${\binom{n}{3}}*8 + {\binom{n}{2}}*4 + {\binom{n}{1}}*2$ clauses to iterate which is on the order of $O(n^3)$

    (1.a.i) - $O(1)$ - For each terminal in C, check if it's opposite form is in D. Since each clause is of length 3 or less, the worst case we iterate over 3 terms in C and check each term in D. If it's a match, we iterate over each clause again and create a new clause as described in Lemma 5.9. This time complexity is intuitively $O(3^2 + 3^2)$ which is constant time.

    (1.a.ii) - $O(n^3)$ - For each clause in the instance, iterate through the instance to check if it exists already. There are on the order of $O(n^3)$ clauses in the instance.

    (1.b) - $O(n^2)$ - For a 2-terminal clause, there are $2*n$ possible 3-terminal clauses that could be expanded to. For a 1-terminal clause, there are $4*n^2$ possible 3-terminal clauses that could be expanded to. This is upper bounded by the latter case.

    (1.c) - $O(n^2)$ - Upper bounded by at most $O(n^2)$ new clauses from the last step

    (1.c.i) - $O(n^3)$ - At most ${\binom{n}{3}}*8 + {\binom{n}{2}}*4 + {\binom{n}{1}}*2$ clauses to iterate which is on the order of $O(n^3)$

    (2) - $O(n^3)$ - Iterating through an instance where there are on the order of $O(n^3)$ clauses

    (2.a) - $O(n^3)$ - Iterating through an instance where there are on the order of $O(n^3)$ clauses

    (2.b.i) - $O(1)$ - Constant time to check if two 1-terminal clauses contain the same terminal in the opposite form

    (3) - $O(n^3)$ - Worst case, we add one new clause each time so we have to loop ${\binom{n}{3}} * 8 + {\binom{n}{2}} * 8 + {\binom{n}{1}} * 2$ times which is on the order of $O(n^3)$

    (4) - $O(1)$ - Constant time to check and return satisfiable

    The time complexity breaks down:

    $(3) * ((1) * ((1.a) * ((1.a.i) + (1.a.ii)) + (1.b) + (1.c) * (1.c.i)) + (2) * (2.a) * (2.a.i)) + (4)$

    It is seen the most computationally expensive steps are 

    $(3) * (1) * (1.a) * (1.a.oii)$

    $ = O(n^3) * O(n^3) * O(n^3) * O(n^3)$

    $= O(n^{12})$

    \section{Proof of Correctness}

    Want to show an instance is unsatisfiable iff we can derive contradicting
    1-terminal clauses by the algorithm.

    WTS Contradicting 1-terminal clauses can be derived $\implies$ the instance is unsatisfiable

    Contradicting 1-terminal clauses take the form:

    $[a]$

    $[-a]$

    Where $a$ is a terminal in the problem.

    In all possible assignments, $a$ can have the value of True or False.

    If $a$ is True, the clause $[-a]$ will be False and the assignment does not satisfy the instance.
    If $a$ is False, the clause $[a]$ will be False and the assignment does not satisfy the instance.

    Since all possible assignments do not allow both clauses to be True, the instance is unsatisfiable.

    Therefore contradicting 1-terminal clauses can be derived $\implies$ the instance is unsatisfiable

    Want to show an unsatisfiable instance $\implies$ the algorithm will derive contradicting 1-terminal clauses

    By Lemma 5.14, since the instance is unsatisfiable, the given 3-terminal clauses can be expanded to every possible $n$-terminal clause.

    By Lemma 5.15 these $n$-terminal clauses can be reduced by Lemma 5.7 to derive contradicting 1-terminal clauses.

    So we know if we can derive these $n$-terminal clauses, we can derive contradicting 1-terminal clauses.

    The idea behind the proof is that we know the $n$-terminal clauses can be used to derive the $1$-terminal via reduction but we'll show that we don't ever have to process a clause above length 3 to derive these contradicting 1-terminal clauses.

    This relies on the fact that all clauses of length 4 or greater, say of length $k$, have to be derived by shorter clauses and we can process the shorter clauses to derive any clauses that the clauses of length $k$ would derive.

    We can do this without processing a clause of length $k$ or greater.

    The way in which we derive these 1-terminal clauses is we use Lemma 5.7 to reduce the $n$-terminal clauses to $(n-1)$-terminal clauses which are reduced to $(n-2)$-terminal clauses which are reduced to ... which are reduced to $2$-terminal clauses and which finally get reduced to $1$-terminal clauses.

    Notice this passes through every possible $k$ from length $2$ to $n$.

    Recall that deriving all of the $n$-terminal clauses was done by using Lemma 5.8.

    These $n$-terminal clauses were then used to derive clauses of length $(n - 1)$ by Lemma 5.7.

    By Lemma 5.18, such a case allows us to derive the clauses of length $(n - 1)$ without ever having to process a clause of length $n$.

    Now we have all of the clauses of length $(n - 1)$ that we would have derived if we processed clauses of length $n$.

    Note how each of these clauses are either derived from given clauses by Lemma 5.8 or by Lemma 5.9 (all Lemma 5.7 derivations are a subset of all Lemma 5.9 derivations).

    We now want to use these $(n-1)$-terminal clauses to derive clauses of length $(n-2)$, but we want to do it without processing clauses whose length is greater than $(n-2)$.

    Since it takes two $(n-1)$-terminal clauses to derive a $(n-2)$-terminal clause by Lemma 5.7, the possible clauses could be of the form:

    \begin{enumerate}
        \item both clauses were derived by Lemma 5.8 (expansion)
        \item both clauses were derived by Lemma 5.9 (reduction/implication)
        \item each clause was derived in a different manner
    \end{enumerate}

    Note that this list is exhaustive because these are the only manner of implications used in the lemmas that allowed us to derive these clauses.

    We can use the following lemmas to handle each case:

    \begin{enumerate}
        \item Lemma 5.19
        \item Lemma 5.17
        \item Lemma 5.18
    \end{enumerate}

    As such we can derive the clauses of length $(n - 2)$ without processing a clause whose length is greater than $(n - 2)$.

    In a more general sense, for any implied clause of length $k$, say $C$, that is used to derive a clause of length $k$ or $k - 1$, say $D$, then we can use the fact that $C$ is derived by shorter clauses and we can use these shorter clauses to directly derive $D$ without ever processing a clause of length $k$ or greater.

    At this point, we have 4-terminal clauses and we want to derive 3-terminal clauses. 
    
    Notice that Lemmas 5.17, 5.18, and 5.19 require the input clauses to be derived so we cannot use those lemmas to derive all the clauses of length 3 we need.

    We need two input clauses to derive the necessary 3-terminal clauses and we have three cases:

    \begin{enumerate}
        \item Both inputs are of length 4
        \item One input is of length 4, the other is of length 3
        \item Both inputs are of length 3
    \end{enumerate}

    We can disregard the last point because we want to derive a 3-terminal clause without processing a clause of length 4 or greater so the claim is vacuously true in this case.

    In the case where both inputs are of length 4, we know they are both derived using smaller clauses and we can use Lemmas 5.17, 5.18, or 5.19.

    In the case where one input is of length 4 and the other is of length 3, there are two cases for how the 4-terminal clause was derived:

    \begin{enumerate}
        \item It was derived using Lemma 5.9 (reduction/implication)
        \item It was derived using Lemma 5.8 (expansion)
    \end{enumerate}

    We can handle the cases in the following ways:

    \begin{enumerate}
        \item Using Lemma 5.11
        \item Using Lemma 5.12
    \end{enumerate}

    Now we have all the 3-terminal clauses that would have been derived while the $n$-terminal clauses were being reduced to contradicting 1-terminal clauses.

    We can now reduce the 3-terminal clauses to 1-terminal clauses.

    Since we have all the necessary 3-terminal clauses without having to process a clause of length 4 or greater, this shows we do not have to process any clauses of 4 or greater to derive contradicting 1-terminal clauses.
    
    Even though the algorithm only explicitly uses Lemma 5.9 and Lemma 5.8, this will cover cases where reduction is needed because Lemma 5.9 is a more general case of Lemma 5.7. Notice, too, that Lemmas 5.11, 5.12, 5.17, 5.18, and 5.19 rely on Lemmas 5.8 and 5.9 so we do not have to explicitly capture the cases where the intermediate lemmas would apply.

    Therefore, this coincides with the described algorithm.

    \section{Conclusion} 

    In this paper, we present an algorithm to solve 3SAT in polynomial time.

    This algorithm relies on Lemmas 5.8 and 5.9. The former of which says a clause can expand to imply another clause by appending any term that's not in the original clause. The latter of which says two clauses sharing one terminal which is positive in one clause and negated in the other can imply a new clause composed of the rest of the terms in either clause. Using these strategies, we can process an instance of 3SAT  while only considering clauses of length 3 or less and are guaranteed to derive a pair of contradicting 1-terminal clauses if and only if the instance is unsatisfiable.

    According to \cite{Karp1972}, 3SAT is NP-complete so if 3SAT can be solved in polynomial time then every problem in NP can be solved in polynomial time.
    
    Since such an algorithm exists, all problems in NP can be solved in polynomial time.

    Thus, P = NP. 

    \bibliography{refs}

\begin{thebibliography}{1}

\bibitem{10.1145/800157.805047}
Stephen~A. Cook.
\newblock The complexity of theorem-proving procedures.
\newblock In {\em Proceedings of the Third Annual ACM Symposium on Theory of Computing}, STOC '71, page 151–158, New York, NY, USA, 1971. Association for Computing Machinery.

\bibitem{Karp1972}
Richard~M. Karp.
\newblock {\em Reducibility among Combinatorial Problems}, pages 85--103.
\newblock Springer US, Boston, MA, 1972.

\end{thebibliography}
    \bibliographystyle{plain}

\end{document}